\newcommand*{\addFileDependency}[1]{
  \typeout{(#1)}
  \@addtofilelist{#1}
  \IfFileExists{#1}{}{\typeout{No file #1.}}
}
\newcommand*{\myexternaldocument}[1]{%
    \externaldocument{#1}%
    \addFileDependency{#1.tex}%
    \addFileDependency{#1.aux}%
}
\newcommand\numberthis{\addtocounter{equation}{1}\tag{\theequation}}
\newcolumntype{C}[1]{>{\Centering}m{#1}}
\newtheorem{example}{Example}
\newtheorem{proposition}{Proposition}
\newtheorem{remark}{Remark}
\title{\textbf{Learning Large $Q$-matrix by Restricted Boltzmann Machines}}
\author{Chengcheng Li, Chenchen Ma and Gongjun Xu\\ Department of Statistics, University of Michigan\footnote{This research is partially supported by NSF CAREER SES-1846747, DMS-1712717, SES-1659328.}}
\date{}
\begin{document}
\pagenumbering{arabic}

\maketitle

\begin{abstract}
Estimation of the large $Q$-matrix in Cognitive Diagnosis Models (CDMs) with many items and latent attributes from observational data has been a huge challenge due to its high computational cost.
Borrowing ideas from deep learning literature, we propose to learn the large $Q$-matrix by Restricted Boltzmann Machines (RBMs) to overcome the computational difficulties. 
In this paper, key relationships between RBMs and CDMs are identified.
Consistent and robust learning of the $Q$-matrix in various CDMs is shown to be valid under certain conditions. 
Our simulation studies under different CDM settings show that RBMs not only outperform the existing methods in terms of learning speed, but also maintain good recovery accuracy of the $Q$-matrix. 
In the end, we illustrate the applicability and effectiveness of our method through  a TIMSS mathematics data set.\\

\noindent Keywords: Cognitive Diagnosis Models; $Q$-matrix; Restricted Bolzmann Machines.
\end{abstract}

\newpage
\section{Introduction}
Cognitive Diagnosis Models (CDMs) are popular statistical tools widely applied to educational assessments and psychological diagnoses, 
which have been receiving increasingly more attention in the past two decades. 
In many modern assessment situations, examiners are concerned with specific attributes that the examinees possess, and thus a simple overall score is no longer sufficient to depict the whole picture of the candidates. 
As a result, a finer evaluation of the examinees' attributes is desired. 
CDMs are such tools. 
They model the relationship between the test items and the examinees’ latent skills, which is helpful in assessment design and post-assessment analysis of the examinees’ latent attribute patterns. 
CDMs have seen vast applications in multiple scientific disciplines, including educational assessments \citep{educationassessment1, education2, educationassessment3}, psychiatric diagnosis of mental disorders \citep{Templin2006, pyctest2}, epidemiological and medical measurement studies \citep{disease}. 

Many CDMs can be viewed as restricted latent class models that directly model the response probabilities as functions of discrete latent attributes.
A common goal of cognitive diagnoses is to learn the examinees' latent attributes, such as personalities or skills, based on their responses to a combination of specially designed test items. 
The $Q$-matrix plays a critical role in CDMs.
It specifies the dependency structure between the test items and the latent attributes.
Knowing the $Q$-matrix accurately is important because it is indispensable to cognitive diagnoses. 
Besides, the $Q$-matrix itself can be used to categorize the test items and enable efficient design of future assessments. 
However, in reality, many existing assessments do not even have the $Q$-matrix explicitly specified.
Even the assessment providers specify the $Q$-matrix when designing the assessment, the specification may still be inaccurate.
In many cases, one test item may potentially be linked to multiple attributes, but usually only the most direct and apparent ones are identified in the pre-designed $Q$-matrix. 
Therefore, it is of paramount importance to develop methodologies to efficiently learn the $Q$-matrix from the observational responses.

Various approaches have been proposed in the literature to learn the $Q$-matrix. 
Those methods can be generally classified into two categories, validation of the existing $Q$-matrix \citep{delatorre2008, DeCarlo2012, Chiu2013, delatorie2016} and direct estimation of the $Q$-matrix from the observational data \citep{Liu_Xu_Z2012, chen_liu_Xu_ying2015, XuEstimateQmatrix, Chung2018AnMA, Chen2018, 7232852c3cdb4a2ebcac33aa88c1fedc}. 
However, most of the existing estimation methods for the whole $Q$-matrix in general suffer from huge computational cost and are not scalable with the size of the $Q$-matrix;
they either break down or are extremely computationally expensive even when the $Q$-matrix is moderately large. 
The high computational cost stems from the large number of configurations of the $Q$-matrix. 
If we view each binary element of the $Q$-matrix as a unique parameter, then the number of different configurations would grow exponentially with the size of the $Q$-matrix. 
In many applications, the number of latent attributes being tested is large, leading to a high-dimensional space for all possible latent attribute patterns. 
It is not uncommon that the number of potential attribute patterns is large, sometimes even larger than the sample size, making the estimation even more difficult. 
Such examples can be found in many applications,
such as educational assessments \citep{largelatentAttributes1, largelatentAttributes2} and the medical diagnosis of disease etiology \citep{disease}; for instance,  Section \ref{sec-real data analysis} presents a dataset from the
Trends in International Mathematics and Science Study (TIMSS), which has 13 binary latent attributes and $2^{13} = 8192$ attribute patterns while only 757 examinees.
On the other hand, the number of items being tested may also be large in many applications.
One example is the TIMSS mathematical test which often have more than 100 test items.
Another  example is the ADM admissions test, which is given twice a year and is used as an entrance test to universities and colleges, contains a total of 200 items \citep{gonzalez2017applying}.
Therefore, it remains an open and challenging problem to learn the large  $Q$-matrix from the observational data.

Borrowing the idea from the deep learning literature, we propose to use the restricted Bolzmann machines (RBMs) to learn the large $Q$-matrix. 
An RBM is a generative two-layer neural network that can learn a probability distribution over a collection of inputs \citep{RBM}. 
Amongst these inputs, some are observed variables while the others are latent variables that we do not observe, which matches the restricted latent class CDM setting. 
The weight matrix $\boldsymbol{W}$ in RBMs determines the relationship between the observed variables and the latent variables. 
By learning this weight matrix $\boldsymbol{W}$ under the framework of RBMs, we show that the structure of the $Q$-matrix in CDMs can be inferred accordingly.
Although this is similar to the maximum likelihood learning approach, by tapping on RBMs, fast learning of the large $Q$-matrix can be achieved.

Our main contributions are that we identify the relationships between CDMs and RBMs, and proposed a new way of learning the large $Q$-matrix efficiently. 
As far as we know, our proposed method is among the first ones in the literature that is scalable with the size of the $Q$-matrix (with computational cost of $O(J \times K)$) while at the same time retains high estimation accuracy. 
For example, comparing to \cite{XuEstimateQmatrix} which attains an estimation accuracy of $71.2\%$ in the GDINA setting with five independent latent attributes using $2000$ observations, 
our method achieves 
more than $86\%$ overall accuracy and much faster computational speed. 
Another interesting finding is that 
learning of the $Q$-matrix by RBMs is robust to different CDMs, including the DINA, ACDM and GDINA models. 
We provide theoretical guarantees under certain conditions and conduct simulation studies to support our findings.  
Besides, because of the unsupervised learning nature of RBMs,
the traditional cross-validation (CV) procedure are not directly applicable. 
As a result, we also present a new CV procedure specifically to the $Q$-matrix learning setting. 

The remaining parts of the paper are organized as follows. Section \ref{sec-CDM&RBM} gives reviews on CDMs and RBMs, and discussion of their relationships and why the learning of the $Q$-matrix by RBMs is achieveable across different CDMs. 
Section \ref{sec-method&algo} introduces our proposed estimation method and the new CV procedure.
Section \ref{sec-simulation-studies} consists of simulation studies on data generated from three typical CDMs. 
Section \ref{sec-real data analysis} demonstrates the performance of our proposed method through the data analysis on a TIMSS mathematics data set. 
Section \ref{sec-conclusion} concludes with discussions and potential future directions.
All the proofs and additional simulation results can be found in the Supplementary Materials.

\section{Estimation of Q-matrix Using RBMs}
\label{sec-CDM&RBM}

\subsection{Review of CDMs}
\label{sec-review}

Many CDMs have been developed in recent decades, among which the Deterministic Input Noisy output “And” gate model  \citep[DINA,][]{haertel1989,educationassessment1} is one of the most popular and simple models and serves as the foundation for many complex CDMs. 
Other popularly used CDMs include the Noisy Input Deterministic “And” gate model   \citep[NIDA,][]{educationassessment1}, the Reduced Reparametrized Unified
Model \cite[R-RUM,][]{Hartz2002}, the
General Diagnostic Model  \citep[GDM,][]{von2005}, the Deterministic Input Noisy “Or” gate   \cite[DINO,][]{Templin2006},  the Log linear CDM \citep[LCDM,][]{LogCDM},  the Additive CDM \citep[ACDM,][]{GDINA} and the Generalized DINA model \cite[GDINA,][]{GDINA}. 

Consider a CDM with $J$ items and $K$ latent attributes.
There are two types of variables for each subject: the observed responses for $J$ items $\mathbf{R}=(R_{1},..., R_{J})$ and the latent attribute pattern $\bm{\alpha}=(\alpha_{1},..., \alpha_{K})$, which are both assumed to be binary. 
$R_{j} \in \{1,0\}$ denotes whether the examinee answers item $j$ correctly and $\alpha_{k} \in \{1,0\}$ denotes possession or non-possession of the attribute $k$.
The $Q$-matrix,  $\boldsymbol{Q} =(q_{j,k}) \in \{0,1\}^{J\times K}$, specifies the dependence structure between the items and the latent attributes; $q_{j,k} \in \{1, 0\}$ denotes whether a correct response to item $j$ requires the latent attribute $k$. 
If we denote the $j$th row of the $Q$-matrix to be $\mathbf{q}_{j}$, then $\mathbf{q}_{j}$ reflects the full attribute requirements of item $j$. 
For a latent attribute pattern $\bm{\alpha}$, we say $\bm{\alpha}$ possesses all the required attributes of item $j$ if $\bm{\alpha} \succeq \mathbf{q}_{j}$, where $\bm{\alpha} \succeq \mathbf{q}_{j}$ means $\alpha_{k} \geq q_{j,k}$ for all $k=1,...,K$. 
Different CDMs model the item response functions $P(R_j=1\mid \bm{\alpha})$ differently with the item parameters constrained by the $Q$-matrix and specific cognitive diagnostic assumptions.
Below we introduce three popular CDMs that will be considered in later discussions.

\begin{example}[DINA model] 
Let $R_{i,j} \in \{1,0\}$ denote whether the subject $i$ answers the item $j$ correctly. 
Under the DINA model \citep{haertel1989,educationassessment1}, for the $j$th item and the $i$th subject with the latent attribute pattern $\bm{\alpha}_i=(\alpha_{i,1},\ldots,\alpha_{i,K})$, the ideal response variable is defined as $\xi_{i,j}=\prod_{k:q_{j,k}=1}\alpha_{i,k}=\prod_{k=1}^{K}\alpha_{i,k}^{q_{j,k}}$. 
The ideal response $\xi_{i,j}=1$ only if $\boldsymbol{\alpha}_i \succeq \boldsymbol{q}_j$, that is, the subject $i$ needs to possess all the latent attributes required by the item $j$ to have a positive ideal response.
The uncertainty is further incorporated by two parameters: the slipping parameter $s_{j}$ and the guessing parameter $g_{j}$.
Specifically, $s_{j}=P(R_{i,j}=0\mid\xi_{i,j}=1)$ and $g_{j}=P(R_{i,j}=1\mid\xi_{i,j}=0)$.
The slipping parameter and the guessing parameter further satisfy $1-s>g$, which indicates that the capable subjects will have higher positive probability than the incapable ones.
The DINA model is one of the most restrictive and interpretable CDMs for dichotomously scored test items. 
It is a parsimonious model that requires only two parameters for each item regardless of the number of attributes required for the item. 
It is appropriate when the tasks call for the conjunction of several equally important attributes, and lacking one required attribute for the item is the same as lacking all the required attributes. 
\label{ex-dina}
\end{example}

\begin{example}[ACDM] 
In the ACDM, mastering additional required attributes will increase the positive response probability for the items.
Specifically, if we take the identity link function in the ACDM, then for the $j$th item and the $i$th subject with attribute pattern $\bm{\alpha}_i=(\alpha_{i,1},\ldots,\alpha_{i,K})$, we have
 \begin{equation}
    P(R_{i,j}=1\mid\bm{\alpha}_{i})= \delta_{j,0} + \sum_{k=1}^{K}\delta_{j,k}\alpha_{i,k}q_{j,k},
    \label{eq-acdm}
  \end{equation}
which implies that mastering the $k$th attribute increases the probability of success on the item $j$ by $\delta_{j,k}$ if the $k$th latent attribute is required by the item $j$. 
Since there are no interaction terms in \eqref{eq-acdm}, the contribution of each latent attribute is independent from one another. 
If the subject $i$ lacks all the required attributes for the item $j$, the term $\sum_{k=1}^{K}\delta_{j,k}\alpha_{i,k}q_{j,k}$ would be 0, and the intercept $\delta_{j,0}$ is the probability of correctly answering the item $j$ based on pure guessing.
Furthermore, even if the $i$th subject has all the required latent attributes of the item $j$, $\delta_{j,0} + \sum_{k=1}^{K}\delta_{j,k}\alpha_{i,k}q_{j,k}$ may not sum to $1$. 
In that case, $1-\big(\delta_{j,0} + \sum_{k=1}^{K}\delta_{j,k}\alpha_{i,k}q_{j,k}\big)$ would be the probability of making a careless mistake.
The ACDM is more appropriate to use when the items call for independent latent attributes but with different contributions to correct response to the items.

Besides the identity link function, other link functions are also proposed. 
One commonly used link function is the logit link,
 \begin{equation}\label{eq:Logit_ACDM}
    P(R_{i,j}=1\mid\bm{\alpha}_{i})= \sigma\Big(\delta_{j,0} + \sum_{k=1}^{K}\delta_{j,k}\alpha_{i,k}q_{j,k}\Big).
  \end{equation}
where $\sigma(x)=(1+\exp(-x))^{-1}$. 
Equation \eqref{eq:Logit_ACDM} is also equivalent to $\text{logit}\big(P(R_{i,j}=1\mid \bm{\alpha}_{i})\big)=\delta_{j,0} + \sum_{k=1}^{K}\delta_{j,k}\alpha_{i,k}q_{j,k}$, which is the log-odds of a positive response. 
The interpretation would then become that each required latent attribute contributes independently to the log-odds of correcting answering item $j$ by $\delta_{j,k}$ in an additive fashion.
\label{ex-acdm}
\end{example}

\begin{example}[GDINA model] 
Both the DINA and ACDM models are special cases of the more general GDINA model \citep{GDINA}. 
In addition to the intercept and the main effects in the ACDMs, the GDINA model also allows interactions amongst the latent attributes. 
The equation (\ref{eq:GDina}) gives the item response function for the GDINA model with identity link. 
 \begin{equation}\label{eq:GDina}
    P(R_{i,j}=1\mid \bm{\alpha}_{i})= \delta_{j,0} + \sum_{k=1}^{K}\delta_{j,k}\alpha_{i,k}q_{j,k} + \sum_{k=1}^{K-1}\sum_{k^{'}=k+1}^{K}\delta_{jkk'}\alpha_{i,k}\alpha_{i,k'}q_{j,k}q_{j,k'} + ... + \delta_{j12...K}\prod_{k=1}^{K}\alpha_{i,k}q_{j,k}.
  \end{equation}
The parameters in equation \eqref{eq:GDina} can be interpreted as follows:
$\delta_{0}$ is the probability of a correct response when none of the required attributes is present; 
$\delta_{k}$ is the change in the probability of a correct response when only mastering a single attribute $\alpha_{k}$;
$\delta_{kk'}$, a first-order interaction effect, is the change in the probability of a correct response due to the possessing of both $\alpha_{k}$ and $\alpha_{k'}$ in addition to the main effects of mastering the two individual attributes; 
and $\delta_{12...K}$ represents the change in the probability of a correct response due to the mastery of all the required attributes in addition to the main effects and all the lower-order interaction effects. Similarly to the ACDM model, $P(R_{i,j}=1\mid\bm{\alpha}_{i})$ is not required to be 1 even when the subject $i$ possesses all the required attribute for the item $j$. 
In that case, $1-P(R_{i,j}=1\mid\bm{\alpha}_{i})$ is the probability of making a careless mistake. 
Moreover, the intercept $\delta_{j,0}$ and the main effects are typically non-negative, but the interaction effects can take on any values. 
Therefore, the GDINA model is appropriate if the mixed effects of latent attributes on the probability of a correct response is of interest.
\label{ex-gdina}
\end{example}

\subsection{Review of Restricted Boltzmann Machines}
\label{sec-RBM}

RBMs are generative models that can learn probabilistic distributions over a collection of inputs.
RBMs were initially invented under the name Harmonium by \cite{RBM} and gained currency due to their fast learnability in the mid-2000. 
It has found vast applications in dimension reduction \citep{dimReductionRBM}, classification \citep{classificationRBM}, collaborative filtering \citep{collaborativeFilteringRBM} and many other fields.

RBMs can also be viewed as a probabilistic bipartite graphical models, with observed (visible) units in one part of the graph and latent (hidden) units in the other part. 
Typically all the hidden units and the visible units are binary.
In this work, we denote the visible units by $\boldsymbol{R}=\{R_{1}, ...R_{J}\} \in \{0,1\}^{J}$ and hidden units by $\bm{\alpha} = \{\alpha_{1}, ... \alpha_{K}\} \in \{0,1\}^{K}$ respectively. 
One key feature of RBMs is that only interactions between hidden units and visible units are allowed.
There are neither connections among the visible units, nor any connections among the hidden units, as shown in Figure \ref{fig:RBM}. 
 
\begin{figure}[H]
\centerline{\includegraphics[scale=0.55]{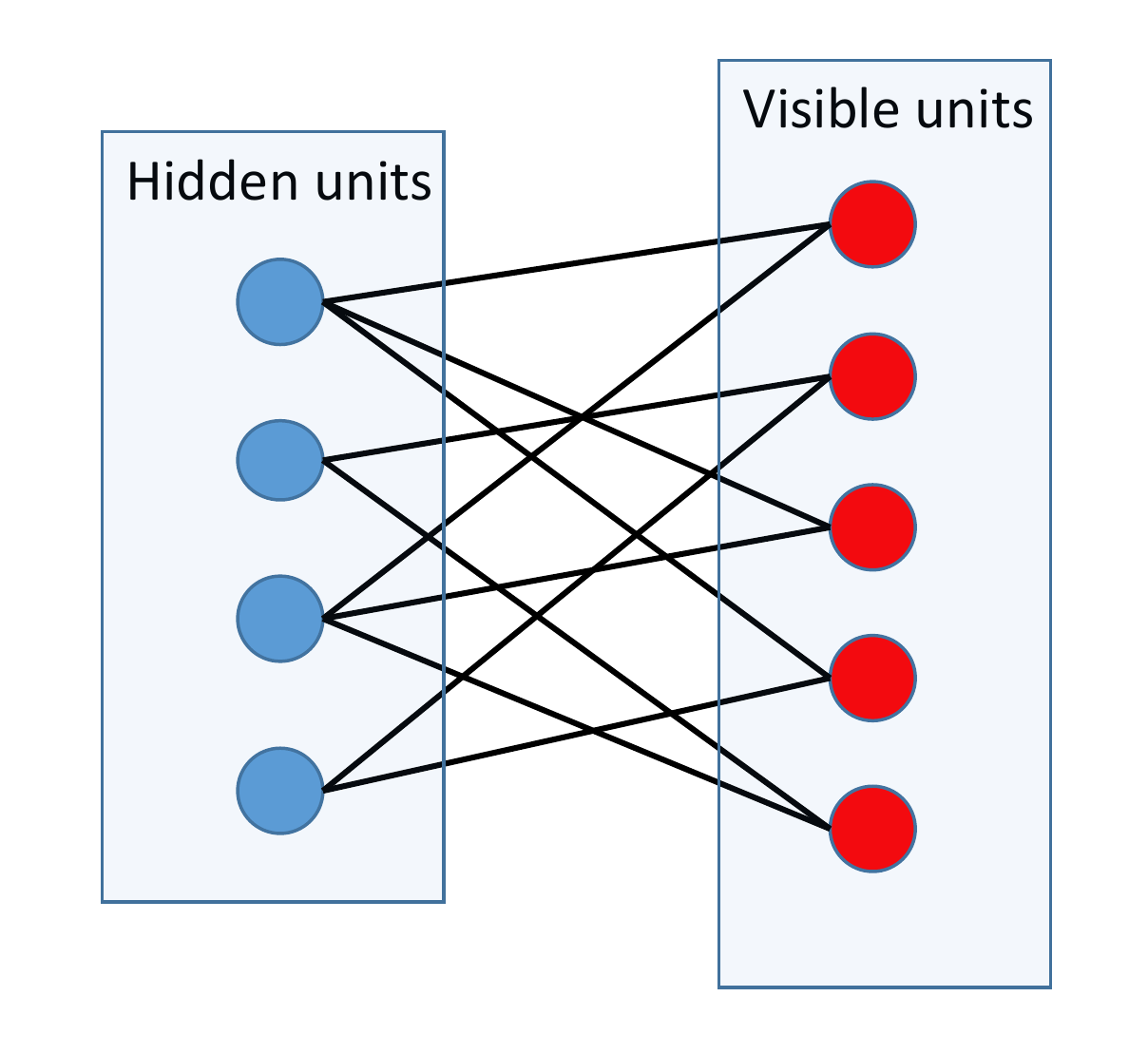}}
\centering
\caption{A graphical illustration of RBM.}
\label{fig:RBM}
\end{figure}

RBMs are characterized by the energy functions with the joint probability distribution specified as
\begin{equation}\label{eq:RBM}
P(\bm{R}, \bm{\alpha}; \bm{\theta}) =\frac{1}{Z(\bm{\theta})}\exp\big\{-E(\bm{R},\bm{\alpha};\bm{\theta})\big\},
\end{equation}
where $E(\bm{R},\bm{\alpha};\bm{\theta})$ is known as the energy function 
and $Z(\bm{\theta})$ is the partition function,
\[
Z(\bm{\theta}) = \sum_{\bm{R}\in \{0,1\}^{J}} \sum_{\bm{\alpha}\in \{0,1\}^K} \exp \big\{-E(\bm{R},\bm{\alpha};\bm{\theta})\big\},
\]
which has been proved to be intractable \citep{long2010restricted}. 
In specific, the energy function is given by
\begin{align*}
E(\bm{R}, \bm{\alpha}; \bm{\theta}) &=
-\bm{b}^{T}\bm{R} - \bm{c}^{T}\bm{\alpha} - \bm{R}^{T}\bm{W}\bm{\alpha}\\
&= {-}\sum_{j=1}^{J}R_{j}b_{j} - \sum_{k=1}^{K} \alpha_{k}c_{k} - \sum_{j=1}^{J}\sum_{k=1}^{K} R_{j}w_{j,k}\alpha_{k}, \numberthis\label{eq:EnergyFunction}
\end{align*}
where $\bm{\theta}=\{\bm{b}, \bm{c}, \bm{W}\}$ are the model parameters, $\bm{b} \in \mathbb{R}^{J}$ are visible biases, $\bm{c} \in \mathbb{R}^{K}$ are hidden biases and $\bm{W} \in \mathbb{R}^{J \times K}$ is the weight matrix describing the interactions between the visible and the hidden units.

Since no ``$\bm{R}$-$\bm{R}$'' or ``$\bm{\alpha}$-$\bm{\alpha}$'' interactions are allowed, the hidden and visible units are conditionally independent given each other, 
and therefore the joint conditional probability mass functions can be factored in to a product. 
This can be easily seen from Equation (\ref{eq:RBM}) and Equation (\ref{eq:EnergyFunction}). 
Specifically, we have
\begin{equation}
    P\big(\bm{R}\mid \bm{\alpha}; \bm{\theta}\big) 
=\prod_{j=1}^J P\big(R_{j} \mid\bm{\alpha};\bm{b},\bm{W}\big),
\numberthis \label{eq:v_given_h}
\end{equation}
\begin{equation}
    P\big(R_{j}=1\mid \bm{\alpha}; \bm{b},\bm{W}\big)
= \sigma\Big(b_{j}+\sum_{k=1}^K w_{j,k}\alpha_{k}\Big),
\numberthis \label{eq:binaryv_given_h}
\end{equation}
and 
\begin{equation}
    P\big(\bm{\alpha}\mid \bm{R}; \bm{\theta}\big) 
=\prod_{k=1}^K P\big(\alpha_{k}\mid\bm{R};\bm{c},\bm{W}\big),
\numberthis \label{eq:h_given_v}
\end{equation}
\begin{equation}
    P\big(\alpha_{k}=1\mid \bm{R}; \bm{c},\bm{W}\big)
= \sigma\Big(c_{k}+\sum_{j=1}^J w_{j,k}R_{j}\Big),
\numberthis \label{eq:binaryh_given_v}
\end{equation}
where $\sigma(x)=1/(1+\exp\{-x\})$ is the logistic sigmoid function. 

RBMs and CDMs are in fact closely related.
The binary observed item responses and the latent attributes in CDMs can be viewed as counterparts to the visible units and the hidden units in RBMs respectively. 
There is a direct connection between the two. 
If we fit an ACDM with the logit link, 
where the conditional probability mass function (\ref{eq:Logit_ACDM}) of the observed responses is modeled as a sigmoid function of the latent attributes, 
then it takes exactly the same form as the conditional probability function (\ref{eq:binaryv_given_h}) of a visible unit given the hidden units in RBMs.
Moreover, in a CDM, $q_{j,k}=0$ indicates that there is no interaction between the item $j$ and the latent attribute $k$, while in the weight matrix of an RBM, $w_{j,k}=0$ also implies no interaction between the $j$th visible unit and the $k$th hidden unit. 
Therefore we would expect that $w_{j,k}=0$ in an RBM whenever $q_{j,k}=0$ in a CDM.

Using the previous example in Figure \ref{fig:RBM} for illustration, on the left of \eqref{WQ} is the weight matrix $\bm{W}$ of an RBM, where $w_{j,k} \neq 0$ indicates the presence of the interaction between the visible unit $R_j$ and the hidden unit $\alpha_k$. 
The corresponding $Q$-matrix in a CDM can be implied as shown on the right. 
As we illustrate previously, the non-zero entries in the $Q$-matrix of an ACDM can be exactly inferred from the non-zero entries in the weight matrix $\bm{W}$ in an RBM.
Interactions among the latent attributes are allowed in the DINA and GDINA models, which violates the assumptions of an RBM. 
However, the $Q$-matrix is still estimable in these models. We give detailed arguments in Section \ref{sec-robust}.

\begin{equation}\label{WQ}
   W=
\begin{bmatrix}
  w_{11}&0&w_{13}&0\\
  0&w_{22}&0&w_{24}\\
  w_{31}&0&w_{33}&0\\
  w_{41}&0&0&w_{44}\\
  0&w_{52}&w_{53}&0
\end{bmatrix}
\quad\implies\quad Q=
\begin{bmatrix}
  1&0&1&0\\
  0&1&0&1\\
  1&0&1&0\\
  1&0&0&1\\
  0&1&1&0
  \end{bmatrix} 
\end{equation}

\subsection{Robust Estimation of Q-matrix}
\label{sec-robust}
In the previous section, we have discussed that RBMs can be used to learn the $Q$-matrix for the ACDM with logit link. 
A natural question to ask is whether we can generalize this result to other CDMs such as the DINA and GDINA models. 
In this section, we will illustrate that under certain conditions, robust estimation of the $Q$-matrix by RBMs is indeed achievable for common CDMs. 
In particular, we will demonstrate that the $Q$-matrix can be estimated correctly under the DINA and GDINA settings. 

We focus on the learning of a particular row of the $Q$-matrix. It is in fact a variable selection problem of the required latent variables for that particular item of interest.
Conditional on $\bm{\alpha}$, we have discussed that  RBMs are equivalent to the ACDM with the logit link, while the latter exactly corresponds to the logistic regression with canonical link and additive main effects linear predictor. 
Therefore in essence, RBMs can also be treated as main effect models. 
Starting with the simplest case, we shall first study the model selection consistency with linear additive models when the true models are the DINA or the GDINA model. 
Since it is still an open and challenging problem to establish consistent variable selection under complex latent variable models, here we start with the ideal case by assuming $\{\alpha_{1},...,\alpha_{K}\}$ are independent, that is, all the latent variables are independent. Although this is a strong assumption and is rarely fully satisfied in real world scenarios, it can be relaxed in practice which is discussed in Remark \ref{rmk-dependency}.

Before giving formal statements, we first introduce some notations.  
Without loss of generality, we focus on the analysis of the response to one single item.
For a subject with $\bm{\alpha}=\{\alpha_{1},...,\alpha_{K}\}$, the response to the considered item is denoted by $R$, where for clarity, we omit the item index in the notation.
Let $K^{*}$ to be the number of required attributes for the item. Without loss of generality, we let the first $K^{*}$ attributes be the required attributes for this item, that is, the corresponding row in the $Q$-matrix is $\textbf{q}=(1,...,1,0,...,0)$ with the first $K^{*}$ entries being 1 and all the remaining $K-K^{*}$ entries being 0. 
For the response $R$ generated from the DINA or the GDINA model, we denote $\mathop{\mathbb{E}}^*[R\mid\bm{\alpha}]$ as  the regression mean function for the mis-specified linear regression model of $R$ on $\alpha_{1},...,\alpha_{K}$. We show in the following propositions that the mis-specified mean function  $\mathop{\mathbb{E}}^*[R\mid\bm{\alpha}]$ can identify the required attributes from the non-required ones.
\begin{proposition}[DINA model]
\label{thm-dina}
Assume $\{\alpha_{1}, \alpha_{2}, ..., \alpha_{K}\}$ are independent with $\alpha_{k} \sim$ Beroulli$(p_{k})$ where $p_{k} \in (0,1)$, $k=1,2,...,K$. If $R$ is generated from the DINA model,
then the mis-specified linear additive model of $R$ regressed on $(\alpha_{1}, \alpha_{2}, ..., \alpha_{K})$ has the mean function in the form of $\mathop{\mathbb{E}}^*[R\mid\bm{\alpha}]=\beta_0+\beta_{1}\alpha_{1}+\beta_{2}\alpha_{2}+...+\beta_{K}\alpha_{K}$ with  $\beta_{l} \neq 0$ for $l=1,2,...,K^{*}$ and $\beta_{k}=0$ for $k= K^{*}+1,...,K$.
\end{proposition}
Proposition \ref{thm-dina} states that under the independence condition and if the data is generated from the DINA model, 
the significant variables included in the true model can be selected correctly using a mis-specified linear model with additive main effects only.

\begin{proposition}[GDINA model]
Assume $\{\alpha_{1}, \alpha_{2}, ..., \alpha_{K}\}$ are independent with $\alpha_{k} \sim$ Bernoulli$(p_{k})$ where $p_{k} \in (0,1)$, $k=1,2,...K$. 
If $R$ is generated from the GDINA model
satisfying the monotonicity assumption (i.e. acquiring an additional required skill $\alpha_{k}$, $k=1,2,..,K^{*}$, will always increase the probability of a correct response), then the mis-specified linear additive model has the corresponding mean function in the form of $\mathop{\mathbb{E}}^*[R\mid\bm{\alpha}]=\beta_0+\beta_{1}\alpha_{1}+\beta_{2}\alpha_{2}+...+\beta_{K}\alpha_{K}$ with  $\beta_{l} \neq 0$ for $l=1,2,...,K^{*}$ and $\beta_{k}=0$ for $k= K^{*}+1,...,K$.
\label{thm-gdina}
\end{proposition}

Similar to Proposition \ref{thm-dina}, Proposition \ref{thm-gdina} states that under suitable conditions, the significant variables included in the true GDINA model can be selected correctly using a mis-specified linear model with additive main effects only.
The detailed proofs for all the propositions can be found in Section \ref{sec-appendix-proofs} of the Supplementary Materials.

Propositions \ref{thm-dina} and  \ref{thm-gdina} demonstrate that the model selection consistency can be achieved using a mis-specified linear main effect model.
As we illustrated previously, the conditional probability of a visible unit on the hidden units in RBMs can be regarded as a main effect logistic regression model.
Therefore we next give some intuition on why the main effect logistic regression model will give a similar variable selection result to the linear models. 
Consider a main effect logistic regression model with the canonical link function, that is, $\mbox{logit}\big(P(R\mid \bm{\alpha})\big)=\beta_0+\beta_{1}\alpha_{1}+...+\beta_{K}\alpha_{K}$.
Let ${\cal R} = (R_i, i=1,...,N)$ denote the response vector for all the $N$ subjects,
and let $\bm{\mu}=\big(\mu_{i}:=P(R_{i}\mid \bm{\alpha}_i\big), i=1,...,N)$ denote the response probabilities for the subjects.
We use $\bm A = \big(\bm{\alpha}_{i}\big)_{i=1}^N \in \{0,1\}^{N\times K}$ to denote the latent attribute matrix for the $N$ subjects and $\bm A^*$ to denote the $N\times (K+1)$ matrix $[\mathbf{1}; \bm A ]$ with the first column being an all-one vector.
In linear models, we usually use the least square estimation to estimate the coefficients,
while in logistic regression, the iteratively re-weighted least square (IRLS) method is used.
Next we will give some intuition on why these two estimation methods will produce similar variable selection results. 

Conditional on $\bm{\alpha}_i$'s, in the $(t+1)$th step of IRLS, the updating rule for parameter $\bm{\theta}:=(\beta_0,\beta_{1},...,\beta_{K})$ is 
\[
\bm{\theta}^{(t+1)}= \big({\bm{A}^*}^{T}\bm{W}^{(t)}{\bm{A}^*}\big)^{-1}{\bm{A}^*}^{T}\bm{W}^{(t)}\bm{Z}^{(t)},
\]
where $\bm{Z}^{(t)}={\bm{A}^*}^{T}\bm{\theta}^{(t)}+(\bm{W}^{(t)})^{-1}({\cal R}-\bm{\mu}^{(t)})$ is the $t$th step working response and $\bm{W}^{(t)}=\text{diag}\big(\mu_{1}^{(t)}(1-\mu_{1}^{(t)}),...,\mu_{N}^{(t)}(1-\mu_{N}^{(t)})\big)$ is a diagonal weight matrix with diagonal elements being the variance estimates for each $R_i$. 
Since there is no closed form of IRLS estimator and there is randomness in the convergence process, it is very challenging to study the theoretical properties of the $\bm{\theta}$ estimated by IRLS. 
So we only consider a one-step update of IRLS starting from the ideal case of true parameter $\bm{\theta}_{\text{true}}$ for illustration. 
It is reasonable to study this ideal case because IRLS will converge close to the $\bm{\theta}_{\text{true}}$ given the correct model specification and a large sample size. If we start with the true parameters, that is, we let $\bm{\theta}^{(0)}=\bm{\theta}_{\text{true}}$, then,
\[
\bm{\theta}^{(1)}= \big({\bm{A}^*}^{T}\bm{W}_{\text{true}}{\bm{A}^*}\big)^{-1}{\bm{A}^*}^{T}\bm{W}_{\text{true}}\bm{Z}_{\text{true}}, 
\]
where the working response, $\bm{Z}_{\text{true}}={\bm{A}^*}^{T}\bm{\theta}_{\text{true}}+\bm{W}_{\text{true}}^{-1}({\cal R}-\bm{\mu}_{\text{true}})$ is just a linear transformation of observed response ${\cal R}$. 
Note that this update takes the same form as the weighted least square estimation of regressing $\bm{Z}_{true}$ on $\bm{A}^*$. 
Hence, the variable selection result in the linear model would be similar to that of the logistic regression. 
Combining Proposition \ref{thm-dina} and Proposition \ref{thm-gdina}, we have justified that the learning of the $Q$-matrix by RBMs is achievable across the DINA, ACDM and GDINA models with both identity and logit links.

 



\begin{remark}
In practice, it is not uncommon that some of the $2^{K}$ latent attribute patterns do not exist in the collected observations, especially when $K$ is large. 
How negatively will this impact on the model selection consistency? 
In the DINA model, we see from the proof of Proposition \ref{thm-dina} (see Section \ref{sec-appendix-proofs} of the Supplementary Materials) that to ensure the variable selection consistency for each required attribute $\alpha_k$, $k=1,...,K^{*}$, 
we need to observe data from subjects with $\big\{\bm{\alpha}\mid \alpha_{k}=0, \alpha_{i}=1, i = 1,\dots, k-1, k+1,\dots, K^*\big\}$
and $\big\{\bm{\alpha}\mid \alpha_{i}=1, i =1,\dots,K^*\big\}$. 
In the GDINA model, from the proof of Proposition \ref{thm-gdina}, we can see that  to ensure the variable selection consistency for each $\alpha_k$, $k=1,...,K^{*}$, we need to observe data from subjects with $\big\{\bm{\alpha}\mid \alpha_{k}=0\big\}$ and $\big\{\bm{\alpha} \mid \alpha_{k}=1\big\}$. 
Therefore, even though some of the latent patterns may not exist in our observed data, the selection consistency is still achievable as long as the required attribute patterns are present.
\end{remark}
\begin{remark}\label{rmk-dependency}
The independent assumption on the latent attributes $\{\alpha_1,...,\alpha_K\}$  can be relaxed to some extent in practice. 
To see this, consider the setting when $\{\alpha_1,...,\alpha_{K}\}$ are possibly dependent but the response $R$ only directly depends on the first $K^*$ attributes $\{\alpha_1,...,\alpha_{K^*}\}$.
Given $\alpha_1,...,\alpha_{K^*},$ the response $R$ is conditionally independent of $\alpha_k$ for all $k=K^*+1,...,K.$
When only $\alpha_1,...,\alpha_{K^*}$ are present in the linear regression model of $R$ regression on $\alpha$'s, consider adding in one additional $\alpha_k$, for any $k=K^*+1,...,K,$ into the regression model, then its coefficient can be expressed as

\begin{align*}
\beta_k
=\frac{Cov\Big(R-\mathbb{E}^*[R\mid \alpha_1,...,\alpha_{K^*}],\quad \alpha_k-\mathbb{E}^*[\alpha_k\mid\alpha_1,...,\alpha_{K^*}]\Big)}{Var\Big(R - \mathbb{E}^*[R\mid \alpha_1,...,\alpha_{K^*}]\Big)},\numberthis\label{eq: cov-residual}
\end{align*}
where we denote $\mathbb{E}^*[A\mid B]$ as the regression mean function of $A$ on $B$. 
Since $R$ and $\alpha_k$ are conditionally independent given $\alpha_1,...,\alpha_{K^*},$ the numerator of \eqref{eq: cov-residual} is expected to be small. 
In real implementations, the shrinkage imposed by the $L_1$ penalty in our proposed method should be able to recover most of these 0's.
This is indeed supported by our simulation results in Section \ref{sec-simulation-studies}, where we consider moderate to high correlation regimes amongst the latent attributes and our proposed method still achieves satisfactory estimation accuracy of the underlying $Q$-matrix. 
Note also that in the special case when $K^*=1,$ the covariance term in \eqref{eq: cov-residual} can be shown exactly equal to zero, in which case $\beta_k$ can be removed easily in the variable selection process. 
For a more detailed discussion for the $K^*=1$ case, please refer to Section \ref{sec-appendix-proofs} of the Supplementary Materials.


\end{remark}

\begin{remark}


The rigorous consistency theory of using RBMs to learn the $Q$-matrix under a general CDM setting can be difficult to establish.
In the literature, even when the true models are binary RBMs, 
consistency for training RBMs is  an open and challenging problem.
 Due to the intractable partition function in the binary RBM, an approximate likelihood maximizing approach has to be employed, such as the popularly used Contrastive Divergence (CD) algorithm that will be further introduced in Section \ref{sec-method&algo}.
  Even though there are many works in literature studying the asymptotic properties of the CD algorithm \citep{mackay2001failures,yuille2004convergence,carreira2005contrastive,bengio2009justifying, sutskever2010convergence,jiang2018convergence},
 whether  and  why  the  CD  algorithm  provides  an  asymptotically  consistent  estimate for  binary  RBMs  are  still  open  questions.
 Therefore, establishing a consistency theorem using a mis-specified RBM model for the DINA or the GDINA model as in this work is even more challenging, which is left for future exploration.
  Nevertheless, the CD algorithm in practice has showed empirical success in training RBMs, and our simulation results in Section \ref{sec-simulation-studies} also demonstrate its effectiveness in training RBMs to learn the $Q$-matrix in CDMs.

\end{remark}

\section{Proposed Estimation Method}
\label{sec-method&algo}

In this section, we will introduce our proposed method in detail.
As we have illustrated in Section \ref{sec-CDM&RBM},
non-zero entries in the $Q$-matrix can be inferred from the corresponding non-zero entries in the weight matrix of RBMs.
Therefore, we are interested in a sparse solution of the weight matrix $\bm{W}$.
It is well known that $L_1$ penalty has the property of producing sparse solutions \citep{Rosasco2009}.
Hence, we propose the following $L_1$ penalized likelihood as our objective function,
\begin{equation} \label{ eq:objective}
    \min_{\bm{\theta}} -\log\big\{ P(\bm{R};\bm{\theta})\big\} + \lambda\sum_{j=1}^{J}\sum_{k=1}^{K}|w_{j,k}|.
\end{equation}
where $\log\{ P(\bm{R};\bm{\theta})\}$ is the marginal log-likelihood of the observed responses $\bm{R}$,  $\bm{\theta}=\{\bm{b},\bm{c}, \bm{W}\}$ are the model parameters, 
and $\lambda$ is a non-negative tuning parameter for the $L_1$ penalty.


Gradient descent algorithm is a standard numerical method to solve problem (\ref{ eq:objective}).
The likelihood part, following the derivation by \cite{RBMDerivations}, can be shown that its gradient with respect to the parameters has the following decomposition:
\begin{align}\label{eq:loglikelihoodgradient}
 \frac{\partial}{\partial \bm{\theta}} \log\big(P(\bm{R};\bm{\theta})\big)
 &=-\sum_{\bm{\alpha}\in\{0,1\}^K}P\big(\bm{\alpha}|\bm{R}; \bm{\theta}\big) \frac{\partial}{\partial \bm{\theta}} E\big(\bm{R}, \bm{\alpha} ; \bm{\theta}\big)+
 \sum_{\substack{\bm{r} \in \{0,1\}^J \\\bm{\alpha}\in\{0,1\}^K}}
 P\big(\bm{r}, \bm{\alpha};\bm{\theta}\big)\frac{\partial}{\partial \bm{\theta}}E\big(\bm{r}, \bm{\alpha} ; \bm{\theta}\big)\\
 &=\mathbb{E}_{P(\bm{\alpha|R};\bm{\theta})}\Big[-\frac{\partial}{\partial \bm{\theta}} E\big(\bm{R}, \bm{\alpha}; \bm{\theta}\big)\Big] - \mathbb{E}_{P(\bm{r,\alpha};\bm{\theta})}\Big[- \frac{\partial}{\partial \bm{\theta}}E\big(\bm{r}, \bm{\alpha}; \bm{\theta}\big) \Big].\label{eq:decp}
\end{align}
In deep learning literature, this is a well-known decomposition into the positive phase and the negative phase of learning, corresponding to the two expectations in \eqref{eq:decp} respectively.  As the two expectations do not have closed forms and are not directly tractable, researchers propose to  approximate the gradient by estimating these expectations through Monte Carlo sampling. In particular, the positive phase corresponds to sampling the hidden units given the visible units,  while the negative phase corresponds to obtaining the joint hidden and visible samples from the current model. 

The bipartite graph structure of RBMs gives the special property of its conditional  distributions $P(\bm{\alpha}\mid\bm{R})$ and $P(\bm{R}\mid\bm{\alpha})$ being factorial and simple to compute and sample from, as shown in Section \ref{sec-RBM}.
Therefore, sampling for the positive phase is straightforward while obtaining samples from the model for negative phase is not since it requires the joint hidden and visible samples.
A widely used algorithm to learn RBMs is known as the Contrastive Divergence (CD) algorithm, where the negative phase is approximated by drawing samples from a short alternating Gibbs Markov chain between visible units and hidden units starting from the observed training examples \citep{hinton2002training}.
In this work, we use a CD-1 algorithm where Gibbs chains are run for 1 step to approximate the gradient of the log-likelihood part.
Specifically, given the original data $\bm{R}^{(0)}$, we first sample $\bm{\alpha}^{(0)}$ according to Equation (\ref{eq:h_given_v}) and Equation (\ref{eq:binaryh_given_v}) to approximate the positive phase.
Then given $\bm{\alpha}^{(0)}$, we sample $\bm{R}^{(1)}$ based on Equation (\ref{eq:v_given_h}) and Equation (\ref{eq:binaryv_given_h}), and we use $(\bm{R}^{(1)},\bm{\alpha}^{(0)})$ to approximate the negative phase.

At $(t+1)$th iteration, based on the sampled data, the parameters' updates take the same form as gradient descent if we do not consider $L_1$ penalty,
\begin{align}
    \label{eq-updates-W}
    {w'}_{j,k}^{(t+1)} & \leftarrow w_{j,k}^{(t)} + \gamma^{(t)}\Big\{
    \sum_{i=1}^N R_{ij}^{(0)}P\big(\alpha_{ik}=1\mid\bm{R}_i^{(0)};\bm{\theta}^{(t)}\big)
    - \sum_{i=1}^N R_{ij}^{(1)} P\big(\alpha_{ik}=1\mid\bm{R}_i^{(1)};\bm{\theta}^{(t)}\big)\Big\},\\
    \label{eq-update-b}
    b_{j}^{(t+1)} & \leftarrow b_{j}^{(t)} + \gamma^{(t)}\Big\{\sum_{i=1}^N R_{i,j}^{(0)}-\sum_{i=1}^{N}R_{i,j}^{(1)}\Big\}/N,\\   
    \label{eq-update-c}
    c_{k}^{(t+1)} & \leftarrow c_{k}^{(t)} + \gamma^{(t)}\Big\{\sum_{i=1}^N P\big(\alpha_{ik}=1\mid\bm{R}_i^{(0)};\bm{\theta}^{(t)}\big)
    -\sum_{i=1}^N P\big(\alpha_{ik}=1\mid\bm{R}_i^{(1)};\bm{\theta}^{(t)}\big)\Big\}/N,
\end{align}
where $\bm{R}_{i}^{(0)} = (R_{i1}^{(0)},R_{i2}^{(0)},\dots,R_{iJ}^{(0)})$, $\bm{R}_i^{(1)} = (R_{i1}^{(1)}, R_{i2}^{(1)},\dots,R_{iJ}^{(1)})$, and $\gamma^{(t)}$ is the learning rate for the $t$th iteration.
Here we denote the updated weight matrix by $\bm{W}'= \big(W'_{j,k}\big)_{J\times K}$, since we also need to consider the gradient of the $L_1$ penalty term later, and thus Equation (\ref{eq-updates-W}) is an intermediate update for the weight matrix.
Detailed derivations can be found in the notes written by \cite{RBMDerivations}. 
In this work, we use a linearly decreasing learning rate scheme, which is guaranteed to converge as shown in \cite{learningRate}.

For the $L_1$ penalty term, we adopt the implementation developed by \cite{L1}, which can achieve more stable sparsity structures. 
As pointed out by \cite{L1}, the traditional implementation of $L_1$ penalty in gradient descent algorithm does not always lead to sparse models because the approximate gradient used at each update is very noisy, which deviates the updates away from zero.

The main idea of the implementation is to keep track of the total penalty and the penalty that has been applied to each parameter,
and then the $L_1$ penalty is applied based on the difference between these cumulative values. 
By doing so, it is argued that the effect of noisy gradient is smoothed away. 
To be more specific, at iteration $t$, let $u^{(t)}:=\lambda\sum_{l=1}^{t}\gamma^{(l)}$ be the absolute value of the total $L_1$ penalty that each parameter could have received up to the point, where $\gamma^{(l)}$ is the learning rate at step $l$. Let $c_{j,k}^{(t-1)}:=\sum_{l=1}^{t-1}(w_{j,k}^{(l+1)}-{w'}_{j,k}^{(l+1)})$ be the total $L_1$ penalty that $w_{j,k}$ has actually received up to step $t$,
where ${w'}_{j,k}^{(l)}$ is the intermediate update at step $l$ calculated by Equation (\ref{eq-updates-W}).
Then at iteration $(t+1)$, we update $w_{j,k}^{(t+1)}$ by
\[    w_{j,k}^{(t+1)} \leftarrow \max\Big\{0, {w'}_{j,k}^{(t+1)}-(u^{(t)}+c_{j,k}^{(t-1)})\Big\} \quad
  \text{if} \quad  {w'}_{j,k}^{(t+1)} > 0,\]
  \[
  w_{j,k}^{(t+1)} \leftarrow \min\Big\{0,  {w'}_{j,k}^{(t+1)}+(u^{(t)}-c_{j,k}^{(t-1)})\Big\} \quad \text{if} \quad  {w'}_{j,k}^{(t+1)} \leq 0.
\]


Since the updates in Equation (\ref{eq-updates-W}), (\ref{eq-update-b}) and (\ref{eq-update-c}) require summations over all the data samples,
it would be computationally expensive when the sample size is large.
To reduce computational burden, we implement a batch version of the CD-1 algorithm in practice, where we only use a small batch of the whole data set in each iteration.
Specifically, we randomly partition the whole data set into $B$ batches, and iterating through all the batches is known as one epoch in machine learning literature.
Here we use $\bm{R} = \big\{\bm{R}_{(1)},\bm{R}_{(2)},\dots,\bm{R}_{(B)}\big\}$ to denote the partitions, $N_B$ to denote the batch size, and $N_{\text{epoch}}$ to denote the number of epoches.
The resulting algorithm is summarized in Algorithm \ref{algo}.

\begin{algorithm}[h!]
\SetAlgoLined

\KwIn{Data $\bm{R}=\Big\{\bm{R}_{(1)},\bm{R}_{(2)},\dots,\bm{R}_{(B)}\Big\}$, $\lambda$, $\gamma_0$, and $N_{\text{epoch}}$.}

\KwOut{Estimates $\hat{\bm{W}}$, $\hat{\bm{b}}$, $\hat{\bm{c}}$.}

Initialize $w_{j,k}^{(0)}, b_{j}^{(0)}, c_{k}^{(0)}$,
$u^{(0)}=0$, $c_{j,k}^{(0)}=c_{j,k}^{(1)}=0$\;

\For{$e = 0,\dots, N_{\text{epoch}}-1$}{
    \For{$b = 0, \dots, B-1$}{
        $t = e\times B + b $ (the number of iterations)\;
        $\gamma^{(t)}=\frac{\gamma_{0}}{t+1}$\;
        $\bm{R}^{(0)} \leftarrow \bm{R}_{(b+1)}$\;
        Sample $\bm{\alpha}^{(0)} \sim P\big(\bm{\alpha}\mid\bm{R}^{(0)};\bm{c}^{(t)},\bm{W}^{(t)}\big)$\;
        Sample $\bm{R}^{(1)} \sim P\big(\bm{R}\mid\bm{\alpha}^{(0)};\bm{b}^{(t)},\bm{W}^{(t)}\big)$\;
        $u^{(t)} \leftarrow u^{(t-1)} + \lambda \gamma^{(t)}$\;
        \For{$j=1,\dots,J, k=1,\dots,K$}{
            ${w'}_{j,k}^{(t+1)} \leftarrow w_{j,k}^{(t)} + \gamma^{(t)}\Big\{
            \sum_{i=1}^{N_B} R_{ij}^{(0)}P\big(\alpha_{ik}=1\mid\bm{R}_i^{(0)}\big)
            - \sum_{i=1}^{N_B} R_{ij}^{(1)} P\big(\alpha_{ik}=1\mid\bm{R}_i^{(1)}\big)\Big\}$;
            \ \\
            \lIf{$t\geq 2$}{ $c_{j,k}^{(t-1)} \leftarrow c_{j,k}^{(t-2)} + w_{jk}^{(t)}-{w'}_{j,k}^{(t)}$} 
            \ \\
            \uIf{${w'}_{j,k}^{(t+1)} > 0$}{
                $w_{j,k}^{(t+1)} \leftarrow \max\Big\{0, {w'}_{j,k}^{(t+1)}-(u^{(t)}+c_{j,k}^{(t-1)})\Big\}$\;
            }
            \uElse{
                $w_{j,k}^{(t+1)} \leftarrow \min\Big\{0,  {w'}_{j,k}^{(t+1)}+(u^{(t)}-c_{j,k}^{(t-1)})\Big\} $ \;
            }
        }
    }    
            \ \\
            \For{$j=1,...,J$}{
                $b_{j}^{(t+1)}  \leftarrow b_{j}^{(t)} + \gamma^{(t)}\Big\{\sum_{i=1}^{N_{B}}R_{i,j}^{(0)}-\sum_{i=1}^{N_{B}}R_{i,j}^{(1)}\Big\}/N_{B}$\;
            }
            \ \\
            \For{$k=1,...,K$}{
                $c_{k}^{(t+1)}  \leftarrow c_{k}^{(t)} + \gamma^{(t)}\Big\{\sum_{i=1}^{N_B}P\big(\alpha_{ik}=1\mid\bm{R}_i^{(0)}\big)               -\sum_{i=1}^{N_B}P\big(\alpha_{ik}=1\mid\bm{R}_i^{(1)}\big)\Big\}/N_B$\;
            }
}
\caption{CD-1 algorithm with $L_1$ penalty}
\label{algo}
\end{algorithm}

In our proposed algorithm, there are two tuning parameters: $\lambda$ for the $L_1$ penalty and $\gamma_0$ for the learning rate.
To get good estimates of our model, we need to select a suitable combination of hyper-parameters $\lambda$ and $\gamma_{0}$.
A popularly used tuning procedure is cross validation (CV).
However, as RBMs are unsupervised learning models, we cannot rely on the so-called ``test error" of the labels. Instead, since visible units are re-sampled at each iteration in the CD algorithm,
we may use the reconstruction error of the visible units to assess the goodness of fit. 
Nevertheless, the visible reconstruction error will always increase as the penalty coefficient $\lambda$ increases, because larger penalty would introduce more bias. Therefore, the traditional CV procedures would not work here. 
To solve this problem, given values of $\lambda$ and $\gamma_0$, instead of directly using the $\hat{\bm{W}}_{\lambda,\gamma_0}$ obtained from a penalized RBM to compute the reconstruction error, 
we propose to debias the non-zero entries in $\hat{\bm{W}}_{\lambda,\gamma_0}$ by training an RBM with no penalty but fixing the zero positions the same as those in $\hat{\bm{W}}_{\lambda,\gamma_0}$.
The proposed CV producedure is summarized below.
\begin{enumerate}
    \item Split the data into $M$ partitions. Each time we use one partition as the validation set and the remaining as the training set.
    
    \item Apply the penalized CD Algorithm \ref{algo} to train the RBM on the training set with pre-specified $\lambda$ and $\gamma_{0}$, and obtain the estimates $\hat{\bm{W}}_{\lambda,\gamma_{0}}$ and $\hat{\bm{Q}}_{\lambda,\gamma_{0}}$.
    
    \item Use the training set again to debias the non-zero entries of $\hat{\bm{W}}_{\lambda,\gamma_{0}}$. 
    Specifically, we use $\hat{\bm{W}}_{\lambda,\gamma_{0}}$ as the initial value and set $\lambda=0$ in Algorithm \ref{algo} to train an unpenalized RBM, and only update the non-zero entries of $\hat{\bm{W}}_{\lambda,\gamma_{0}}$ while keeping the zero entries unchanged. 
    Hidden bias $\bm{c}$ and visible bias $\bm{b}$ are updated at each step as usual. 
    This step give us the de-biased weight matrix $\check{\bm{W}}_{\lambda,\gamma_{0}}$.
    
    \item Compute the reconstruction error on the validation set. 
    In specific, at each iteration of the CD algorithm, we fix $\bm{W} = \check{\bm{W}}_{\lambda,\gamma_{0}}$, and only update the hidden and visible biases. 
    The reconstruction error is computed as the mean batch squared error between the latest sampled visible batches $\{\bm{R}^{(1)}_{1}, ..., \bm{R}^{(1)}_{m}\}$ and the observational batches $\{\bm{R}^{(0)}_{1}, ..., \bm{R}^{(0)}_{m}\}$ in the validation set.
    
    \item For each combination of $\lambda$ and $\gamma_{0}$ in the candidate set, we repeat Step 2-4 across all $M$ validation sets.
    The $\hat{\bm{Q}}_{\lambda^*,\gamma_{0}^*}$ corresponding to the smallest mean batch squared error (see Section \ref{sec-simulation-studies} for definition) is taken as the final estimate of the $Q$-matrix.
    
\end{enumerate}

Another major difference from the traditional CV procedure is we select the $Q$-matrix corresponding to the smallest validation error instead of taking average of the validation errors and then training a new RBM with the best tuning parameters according to the smallest mean error.
There are two advantages.
On one hand, the traditional way of averaging errors, though more stable, is very time-consuming in this problem.
On the other hand, the gradient descent steps in the CD algorithm may only produce locally optimal results.
To avoid being stuck in sub-optima, we run the CD algorithm $M$ times with different initializations and different training and validation sets for each combination of $\lambda$ and $\gamma_{0}$, 
and select the estimated $Q$-matrix corresponding to the smallest validation error.
By doing so, the $Q$-matrix is expected to be more accurately estimated.

\begin{remark}

The computational cost of our proposed method only grows linearly in $K$ and this enables estimation of very large $Q$-matrices.
As far as we know,  the current methods in the literature have computational cost greater than $O(K),$ with the majority growing exponentially with $K$. 
For example, in \cite{XuEstimateQmatrix}, they proposed to learn the $Q$-matrix by estimating the coefficients in the LCDM plus a penalty term with the EM algorithm.
In the E-step of the EM algorithm, $2^K$ posterior probabilities for each of the attribute patterns need to be updated.
However, we also point out that there may be alternative approaches that are also computationally feasible.
Thanks to one of the reviewers, who suggests it may also be feasible to use the traditional ACDM to learn large $Q$-matrices. 
Note that all of our arguments in Sections \ref{sec-robust} also apply to the ACDM model.
With a high-order model that parameterizes the distribution of the binary vector of attributes, such as a Probit model, the number of parameters that need to be learned can be reduced from $2^K$ to $O(K^2)$. Together with a stochastic gradient descent algorithm, this can also be a computationally feasible approach.

\end{remark}

\section{Simulation Studies}
\label{sec-simulation-studies}
We conduct simulation studies on three popular CDMs, the DINA, ACDM and GDINA models, to study the performance of our proposed method in learning the $Q$-matrix under different CDM settings. In particular, we examine the scalability to the size of the $Q$-matrix and the estimation accuracy of the proposed algorithm.

We first introduce the metrics used to evaluate the performance of the proposed estimation method. 
To measure the convergence of the algorithm, we investigate the change in the mean batch errors against time.  
The mean batch error is the reconstruction error between the latest sampled visible batches $\big\{\bm{R}^{(1)}_{(1)}, ..., \bm{R}^{(1)}_{(B)}\big\}$ and the original observed batches $\big\{\bm{R}_{(1)}^{(0)}, ...,\bm{R}_{(B)}^{(0)}\big\}$, where $\big\{\bm{R}_{(1)}^{(0)}, ...,\bm{R}_{(B)}^{(0)}\big\}$ partitions the whole observed data set into $B$ batches.
Given the batch-size $N_B$, the mean batch error is defined as 
\[
\frac{1}{B N_B}\sum_{b=1}^{B}\sum_{i=1}^{N_B}\sum_{j=1}^{J} \Big( R_{(b),i,j}^{(1)}-R_{(b),i,j}^{(0)} \Big)^2 .
\]
To evaluate the estimation accuracy, we report entry-wise overall percentage error (OE), out of true positives percentage error (OTP) and out of true negatives percentage error (OTN). 
Specifically, 
$$\text{OE}:= \frac{1}{JK}\sum_{j=1}^{J}\sum_{k=1}^{K}\mathbbm{1}\big\{\hat{q}_{j,k} \neq q_{j,k}\big\},$$ 
which is the percentage of wrongly estimated entries out of the total number of entries in the $Q$-matrix.
$$\text{OTP}:=\frac{\sum_{j=1}^{J}\sum_{k=1}^{K}\mathbbm{1}\big\{\hat{q}_{j,k}=0, q_{j,k}=1\big\}}{\sum_{j=1}^{J}\sum_{k=1}^{K}\mathbbm{1}\big\{q_{j,k}=1\big\}},$$ 
which is defined as the percentage of wrongly estimated entries out of all true positive entries (i.e. entries 1) in the $Q$-matrix. 
$$\text{OTN}:=\frac{\sum_{j=1}^{J}\sum_{k=1}^{K}\mathbbm{1}\big\{\hat{q}_{j,k}=1, q_{j,k}=0\big\}}{\sum_{j=1}^{J}\sum_{k=1}^{K}\mathbbm{1}\big\{q_{j,k}=0\big\}},$$ 
which is defined as the percentage of wrongly estimated entries out of all true negatives (i.e. entries 0) in the $Q$-matrix. 
A challenge in computing these errors arises because the estimated $Q$-matrix can only be identified up to column permutations. 
To resolve this problem, we apply the Hungarian algorithm to match the columns of the estimated $\hat{Q}$ to the true $Q$-matrix by jointly minimizing the total column-wise matching errors. Details of the Hungarian algorithm can be found in \cite{Kuhn2010TheHM}.

We consider different number of latent attributes $K=5,10,15,20,25$.
To ensure the $Q$-matrix is identifiable so that it can be learned from the observational data, we specify it as follows: 
\begin{align} \label{eq:Qmatrix}
    Q &= \begin{bmatrix}
           I_{K} \\
           Q_{1} \\
           Q_{2}
           \end{bmatrix},
\end{align}
where $I_{K}$ is a $K$ dimensional identity matrix; 
$Q_{1} \in \{0,1\}^{K \times K}$ with value $1$ in the $(i,i)$th entries for $i=1,...,K$ and the $(i,i+1)$th entries for $i=1,...,K-1$, and values $0$ for all the other entries; 
$Q_{2} \in \{0,1\}^{K \times K}$ with value 1 in entries $(i,i)$ for $i=1,...,K$, $(i, i-1)$ for $i=2,...,K$ and $(i,i+1)$ for $i=1,...,K-1$, and value 0 for all the remaining entries. 
The above construction sets the number of items to be $J=3K$.
This $Q$-matrix satisfies the identifiability conditions in \cite{identifiability} and therefore is identifiable under the DINA setting in Simulation Study \ref{simu-dina}. 
Moreover, this construction also ensures the (generic) identifiability of the ACDM and GDINA models considered in Simulation Studies \ref{simu-acdm} and \ref{simu-gdina} \citep[see][]{xu2017, gu2018sufficient, gu2018partial}. A random design of the $Q$-matrix, in which its identifiability is not be guaranteed, is also considered in Section \ref{sec-appendix-randomQ} of the Supplementary Materials.

In each simulation study, we consider two different sample sizes $N= 2000 \text{ or } 10000$. Both independent and dependent settings of latent attributes are explored. 
Denote the latent attribute matrix by $\bm{A} = \big(\bm{\alpha}_i\big)_{i=1}^N \in \{0,1\}^{N \times K}$, which depicts the latent attribute patterns of the $N$ examinees.
We use two steps to simulate the latent patterns \citep{chen_liu_Xu_ying2015}. 
First, a Gaussian latent vector is generated for each subject $\bm{z}_i=(z_{i1},...,z_{iK}) \stackrel{i.i.d.}{\sim} \mathcal{N}(0,\Sigma)$ for $i=1,...,N$, where $\Sigma=(1-\rho)\bm{1}_K+\rho\bm{1}_K\bm{1}_K^{\top}$, $\bm{1}_K = (1,\dots,1)_K^{\top}$, and $\rho$ is the correlation between any two different latent attributes. In practice, since some attributes may be harder to master than others, different thresholds are applied in the sampling of attribute profiles. In particular, for a given $K$, we specify the thresholds ranging from $-0.5$ to $0.5$, with a step size of $1/(K-1)$, for each of attribute $1, 2, ... K$ respectively. Then $\alpha_{ik}=1$ if $z_{ik}$ is greater than its respective threshold and $\alpha_{ik}=0$ otherwise. For the independent setting, we set $\rho=0,$ while for the dependent settings, we consider both a low correlation with $\rho=0.25$ and a high correlation with $\rho=0.75.$ 
For the tuning of hyper-parameters, we take the candidate sets as $\lambda\in\{0.003,0.004, ..., 0.015\}$ and $\gamma_{0}\in\{0.5, 1, ... , 5.5\}$, and perform $5$-fold CV to select the best estimated $Q$-matrix.
For each setting, 100 repetitions are simulated.
The batch size and the number of epochs are fixed at $50$ and $300$ respectively.


\subsection{Simulation Study 1. DINA Model}
\label{simu-dina}
For the DINA test items, we consider two uncertainty levels, 
$g_{j}=s_{j}=0.1$ or $g_{j}=s_{j}=0.2$ for all $j=1,...,J$.
Figure \ref{fig:DINA_MSE_time_k25} plots the mean batch errors against time for the independent case with $K=5$ (the first row) and $K=25$ (the second row) across different sample sizes and different noise levels.
When $K=5$, we can see that the CD-$1$ algorithm converge well after $6$ seconds for all different sample sizes under different noise levels. 
This suggests that with a small number of latent attributes, the sample sizes and the uncertainty levels do not affect the convergence speed a lot.
Focusing on the second row of Figure \ref{fig:DINA_MSE_time_k25}, we note that although the size of the $Q$-matrix increases from $75$ ($K=5$) to $1875$ ($K=25$), the convergence time only increases by around $10$ seconds, and the CD-$1$ algorithm converges well after just $15$ seconds even when $K=25$. 
This indicates that the proposed method is scalable with the size of the $Q$-matrix. Dependent settings have similar convergence rates and hence the results are omitted.

Figure \ref{fig:DINA_Error_VS_Q} and \ref{fig:DINA_Error_VS_Q_dep} plot different estimation errors against the sizes of the $Q$-matrix for independent and dependent settings respectively. 
For the independent case, in Figure \ref{fig:DINA_Error_VS_Q}, we can see that the OE stays below $16\%$ across all the settings. 
There is a decreasing trend in the OE as the $Q$-matrix size increases due to the increasing sparsity of the true underlying $Q$-matrix.
Our proposed method performs significantly better than the baseline method predicting all the entries of the $Q$-matrix to be 0 (which would produce OE of $36\%$ for $K=5$).
Furthermore, we note that increasing uncertainty level will deteriorate the OTP, making the estimation of positive entries harder.
Increasing the sample size $N$ would in general help improve the estimation accuracy.
For the dependent case, in Figure \ref{fig:DINA_Error_VS_Q_dep}, we observe that the results in the low correlation setting are very similar to that of the independent setting. This suggests that our proposed method is robust when moderate correlations amongst latent attributes exist.
On the other hand, when the correlations amongst the attributes are high, we see increments in all the three error metrics, OE, OTP and OTN. The correlations amongst the attributes would compound the difficulty in estimation of the $Q$-matrix.
However, all the OE's still stay well below 20\%. Hence, our proposed method can still achieve effective learning of the $Q$-matrix when the correlations amongst the attributes are high.

\begin{figure}[H]
  \centering
  \begin{subfigure}{\linewidth}
    \centering
    \includegraphics[scale=0.45]{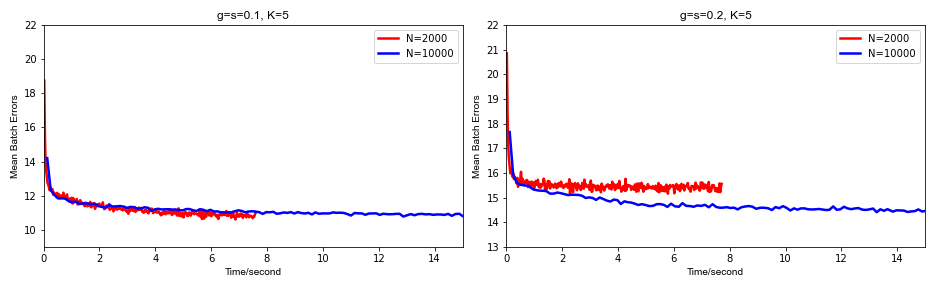}
  \end{subfigure}

  \begin{subfigure}{\linewidth}
    \centering
   \includegraphics[scale=0.45]{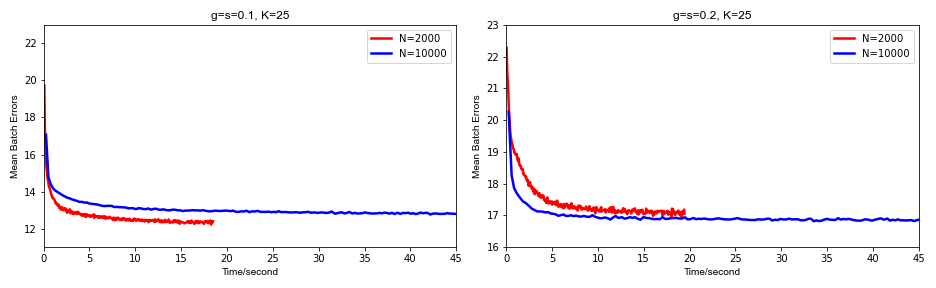}
  \end{subfigure}  
  \caption{Plots of mean batch errors against time for the DINA data.}  
  \label{fig:DINA_MSE_time_k25}
\end{figure}  

\begin{figure}[H]
  \centering
  \begin{subfigure}{\linewidth}
    \centering
    \includegraphics[scale=0.45]{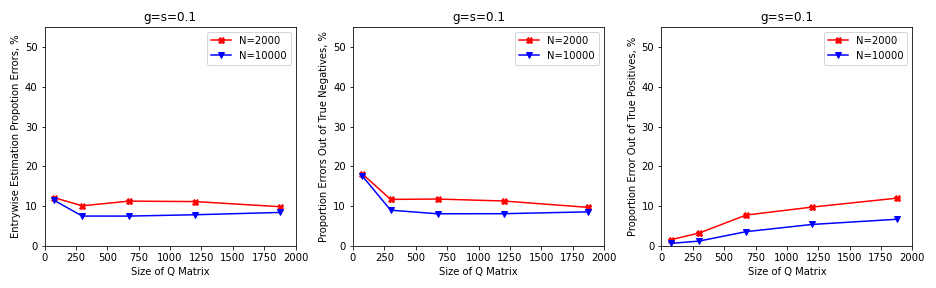}
  \end{subfigure}

  \begin{subfigure}{\linewidth}
    \centering
   \includegraphics[scale=0.45]{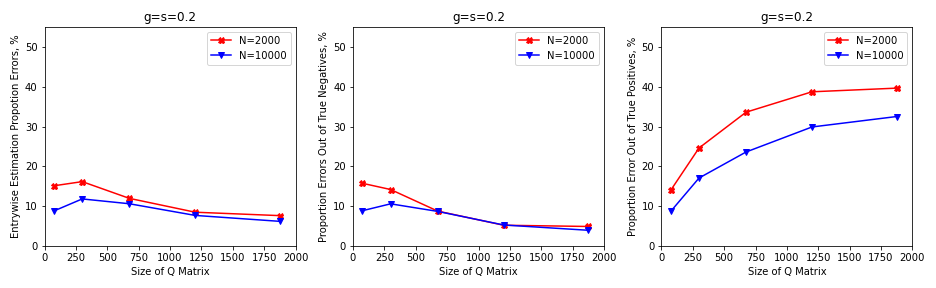}
  \end{subfigure}  
  \caption{Plots of different performance metrics against the size of the $Q$-matrix for the DINA data (independent case).}  
  \label{fig:DINA_Error_VS_Q}
\end{figure}

\begin{figure}[H]
  \centering
  \begin{subfigure}{\linewidth}
    \centering
    \includegraphics[scale=0.45]{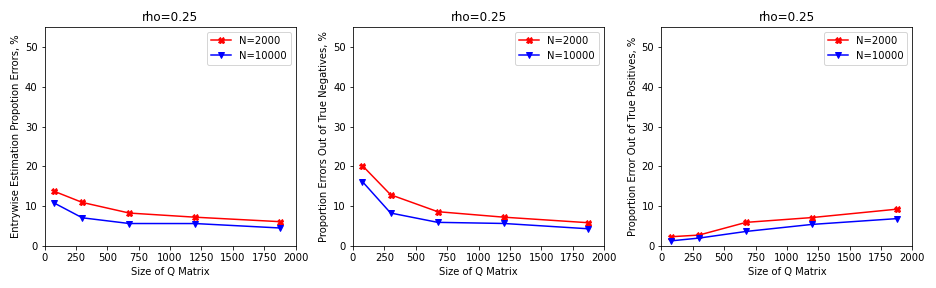}
  \end{subfigure}
  
  \begin{subfigure}{\linewidth}
    \centering
    \includegraphics[scale=0.45]{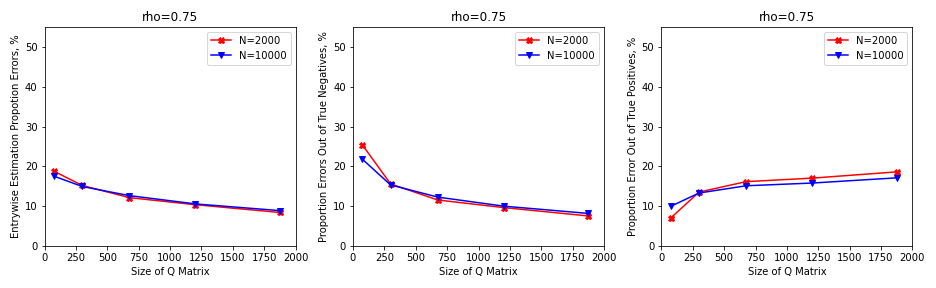}
  \end{subfigure}
  \caption{Plots of different performance metrics against the size of the $Q$-matrix for the DINA data (dependent case with $g=s=0.1$). Row 1 and 2 correspond to correlation settings 0.25 and 0.75 respectively. }  
  \label{fig:DINA_Error_VS_Q_dep}
\end{figure}  





\subsection{Simulation Study 2. ACDM Model}
\label{simu-acdm}
We conduct similar analysis using data generated from the ACDM to examine the convergence speed and estimation accuracy of our proposed method. 
Define $K_{j}^{*}$ to be the number of required attributes for the item $j$. Without loss of generality, we let the first $K_{j}^{*}$ attributes be the required attributes for item $j$, i.e., the corresponding row in the $Q$-matrix is $\bm{q}_{j}= (1,...,1,0,...,0)$ with the first $K_{j}^{*}$ entries being $1$ and all the remaining $K-K_{j}^{*}$ entries being $0$. 
For an ACDM with the identity link function \ref{eq-acdm}, we have $P(R_j=1 \mid \bm{1}_K)=\delta_{j,0}+\sum_{k=1}^{K_{j}^{*}}\delta_{j,k} := p_{j}$, the highest success probability achievable for the most capable subjects. 
Similar to the DINA setting, two different uncertainty levels are considered: case 1. $\delta_{j,0}=0.1$, $p_j=0.9$ for all $j=1,...,J$ and  case 2. $\delta_{j,0}=0.2$, $p_j=0.8$ for all $j=1,...,J$. 
For $k=1,..., K_{j}^{*}$, $\delta_{j,k}$ is set to be $(p_{j}-\delta_{j,0})/K_{j}^{*}$, that is, the contribution of each required attribute to the success probability is equal. 

Figure \ref{fig:ACDM_error_vs_time} shows the convergence speed of our proposed method under the independent setting. 
We observe similar patterns as in the DINA case: 
uncertainty levels and samples sizes do not have significant impacts on the convergence speed. 
Our proposed algorithm is scalable with the size of the $Q$-matrix in the ACDM setting. 
Figure \ref{fig:ACDMerror_vs_Q} and \ref{fig:ACDMerror_vs_Q_dep} plot different estimation metrics against the size of the $Q$-matrix for independent and dependent settings respectively. 
From Figure \ref{fig:ACDMerror_vs_Q}, we can see that the results are very similar to those observed in the DINA model setting,  
which demonstrates that our proposed methods is effective in the ACDM data. 
Furthermore, for the dependent setting in Figure \ref{fig:ACDMerror_vs_Q_dep}, we observe that when the correlation is of 0.25, the estimation accuracy remains similar to that in the independent settings. 
When the correlation is of 0.75, unlike in the DINA setting, the OE, OTP and OTN only increase very slightly. 
In particular, the OE stays well below 16.5\% when $K=5,10,...,25$.
This suggests that when the true data generating model is the ACDM, our proposed method is robust when the correlations amongst the attributes are high.


%

\begin{figure}[H]
  \centering
  \begin{subfigure}{\linewidth}
    \centering
    \includegraphics[scale=0.45]{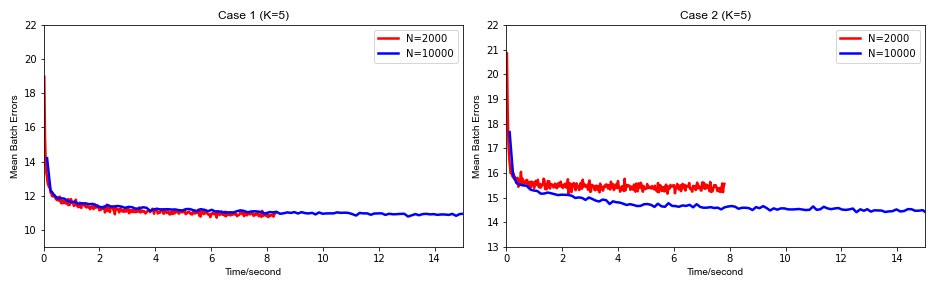}
  \end{subfigure}

  \begin{subfigure}{\linewidth}
    \centering
   \includegraphics[scale=0.45]{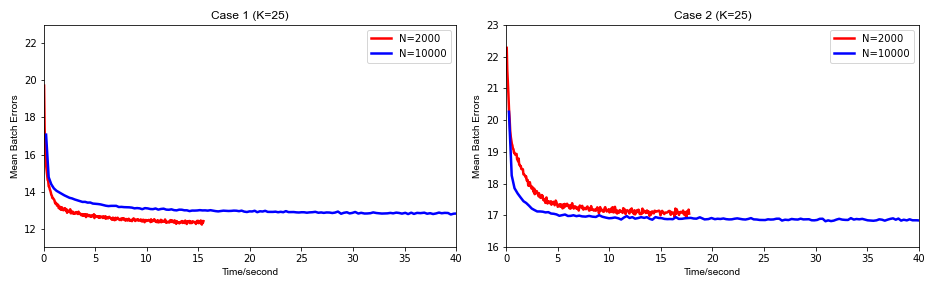}
  \end{subfigure}  
  \caption{Plots of mean batch errors against the time for the ACDM data.  Case 1 represents the setting when $\delta_{j,0}=0.1$, $p_j=0.9$ for all $j=1,...,J$. Case 2 represents the setting when $\delta_{j,0}=0.2$, $p_j=0.8$ for all $j=1,...,J$.}
\label{fig:ACDM_error_vs_time}
\end{figure}

\begin{figure}[H]
  \centering
  \begin{subfigure}{\linewidth}
    \centering
    \includegraphics[scale=0.45]{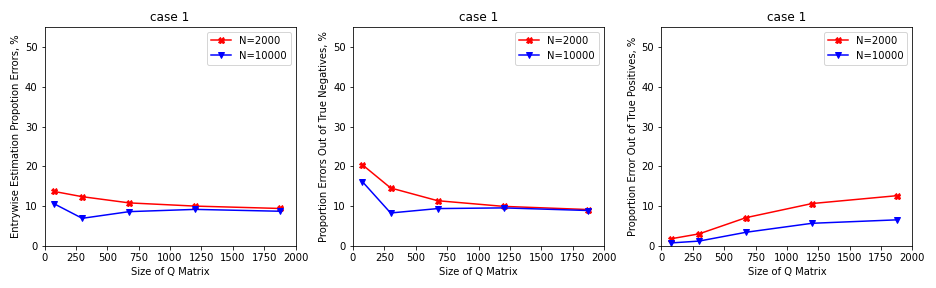}
  \end{subfigure}

  \begin{subfigure}{\linewidth}
    \centering
   \includegraphics[scale=0.45]{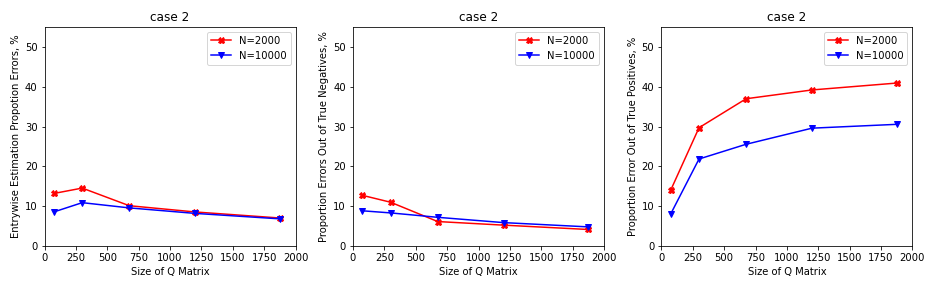}
  \end{subfigure}  
  \caption{Plots of different performance metrics against the size of the $Q$-matrix for the ACDM data (independent case).  Case 1 represents the setting when $\delta_{j,0}=0.1$, $p_j=0.9$ for all $j=1,...,J$. Case 2 represents the setting when $\delta_{j,0}=0.2$, $p_j=0.8$ for all $j=1,...,J$. }
\label{fig:ACDMerror_vs_Q}
\end{figure}

\begin{figure}[H]
  \centering
  \begin{subfigure}{\linewidth}
    \centering
    \includegraphics[scale=0.45]{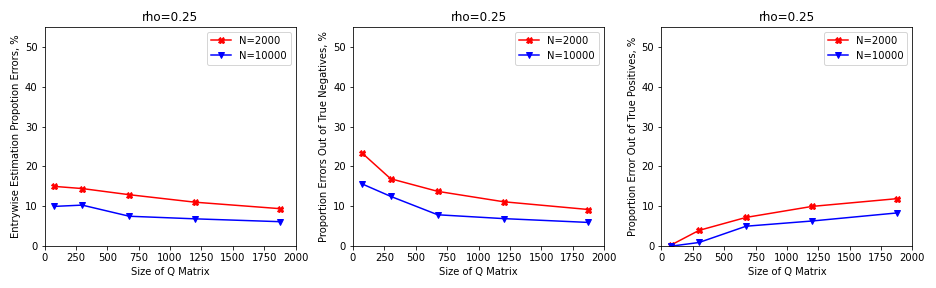}
  \end{subfigure}
  
  \begin{subfigure}{\linewidth}
    \centering
    \includegraphics[scale=0.45]{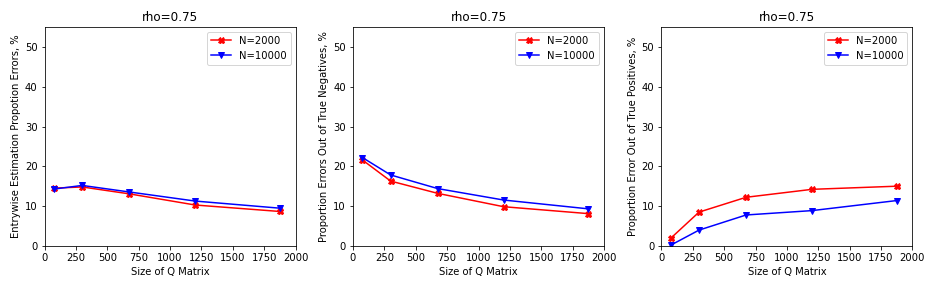}
  \end{subfigure}
  \caption{Plots of different performance metrics against the size of the $Q$-matrix for the ACDM data (dependent case with  $\delta_{j,0}=0.1$, $p_j=0.9$ for all $j=1,...,J$). Row 1 and 2 correspond to correlation settings 0.25 and 0.75 respectively.}  
  \label{fig:ACDMerror_vs_Q_dep}
\end{figure}



\subsection{Simulation Study 3. GDINA Model}
\label{simu-gdina}
Let the highest success probability achievable for the most capable subjects be $P(R_{j}=1\mid \bm{1}_K):=p_{j}$ from Equation (\ref{eq:GDina}). 
Similar to the ACDM setting, we consider two uncertainty levels: case 1. $\delta_{j,0}=0.1$, $p_{j}=0.9$ for all $j=1,...,J$ and case 2. $\delta_{j,0}=0.2$, $p_{j}=0.8$ for all $j=1,...,J$. 
Using the $Q$-matrix specified at the beginning of this section, for each item $j$, we may have $K_j^*=1,2$ or 3. 
When $K_{j}^{*}=1$,we set $\delta_{j,k}=p_{j}-\delta_{j,0}$.
When $K_{j}^{*}=2$, we let $\delta_{j,k}=\delta_{jkk'}=(p_{j}-\delta_{j,0})/3$ and when $K_{j}^{*}=3$ we set $\delta_{j,k}=\delta_{jkk'}=\delta_{jkk'k''}=(p_{j}-\delta_{j,0})/7$. 
As such, the main effects and the interaction terms are all assumed to have the same contributions to the probability of a positive response.  
Both independent and dependent settings are considered. 

Convergence rates under independent setting are summarized in Figure \ref{fig:GDinaerror_vs_time}. 
Similar patterns to the DINA and the ACDM settings can be observed, indicating that our algorithm is scalable to the size of the $Q$-matrix in the GDINA model. 
As before, dependent settings have similar convergence patterns, and hence the results are not presented here.
Behaviors of different estimation metrics over the size of the $Q$-matrix for both the independent and dependent settings are summarized in Figure \ref{fig:GDinaerror_vs_Q} and \ref{fig:GDinaerror_vs_Q_dep} respectively. 

For the independent setting in Figure \ref{fig:GDinaerror_vs_Q}, slightly better estimation accuracy can be observed than in the DINA and the ACDM settings. 
This suggests our proposed methods is effective in the learning the $Q$-matrix from data generated using the GDINA model. One thing to emphasize is that our method is competitive amongst the existing algorithms in the literature. For example, comparing to a similar simulation study in \cite{XuEstimateQmatrix} for $K=5$ independent attributes and $N=2000$, our overall estimation accuracy of around $87\%$ is significantly better than theirs, whose overall accuracy is $71.2\%$. Moreover, our method also has much smaller computational cost than their method. 
For the dependent setting in Figure \ref{fig:GDinaerror_vs_Q_dep}, we observe that the estimation accuracy remains similar to the independent setting when the correlation is of 0.25. When the correlations are increased to 0.75, all the three error metrics only increase very slightly. This observation is similar to the ACDM setting. The OE's remain well below 16.5\% for all $K=5,10,...,25.$ This suggests that when the true data generating model is the GDINA model, the proposed method is fairly robust to high attribute correlations.


 %
\begin{figure}[H]
  \centering
  \begin{subfigure}{\linewidth}
    \centering
    \includegraphics[scale=0.45]{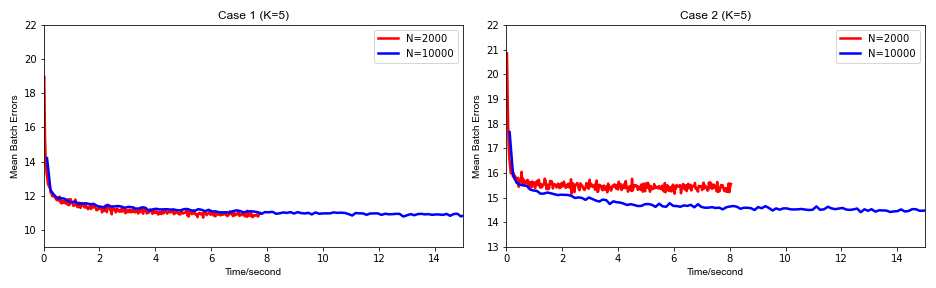}
  \end{subfigure}

  \begin{subfigure}{\linewidth}
    \centering
  \includegraphics[scale=0.45]{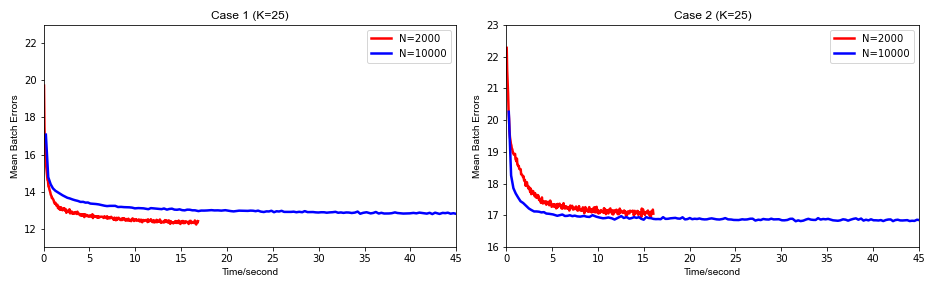}
  \end{subfigure}  
  \caption{Plots of mean batch errors against the size of the $Q$-matrix for the GDINA data. Case 1 represents the setting when $\delta_{j,0}=0.1$, $p_{j}=0.9$ for all $j=1,...,J$. Case 2 represents the setting with higher uncertainty levels when $\delta_{j,0}=0.2$, $p_{j}=0.8$ for all $j=1,...,J$. }
\label{fig:GDinaerror_vs_time}
\end{figure}

\begin{figure}[H]
  \centering
  \begin{subfigure}{\linewidth}
    \centering
    \includegraphics[scale=0.45]{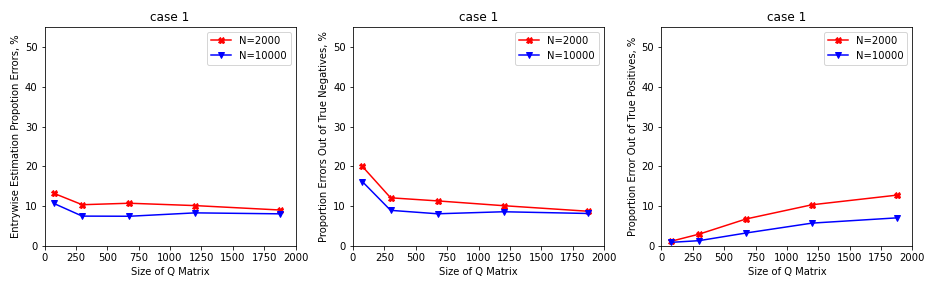}
  \end{subfigure}

  \begin{subfigure}{\linewidth}
    \centering
  \includegraphics[scale=0.45]{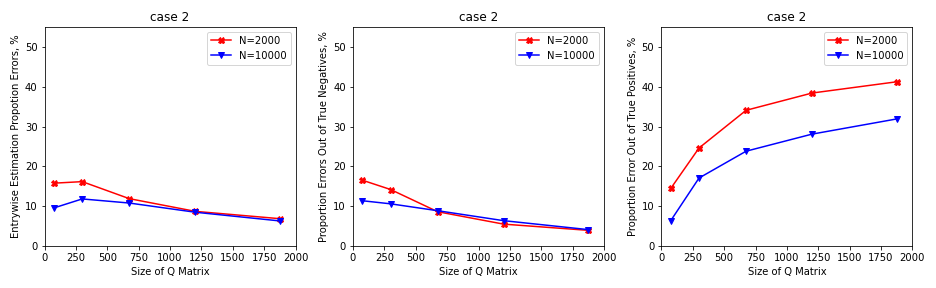}
  \end{subfigure}  
 \caption{Plots of different performance metrics against the size of the $Q$-matrix for the GDINA data (independent case). Case 1 represents the setting when $\delta_{j,0}=0.1$, $p_{j}=0.9$ for all $j=1,...,J$. Case 2 represents the setting with higher uncertainty levels when $\delta_{j,0}=0.2$, $p_{j}=0.8$ for all $j=1,...,J$. }
\label{fig:GDinaerror_vs_Q}
\end{figure}

\begin{figure}[H]
  \centering
  \begin{subfigure}{\linewidth}
    \centering
    \includegraphics[scale=0.45]{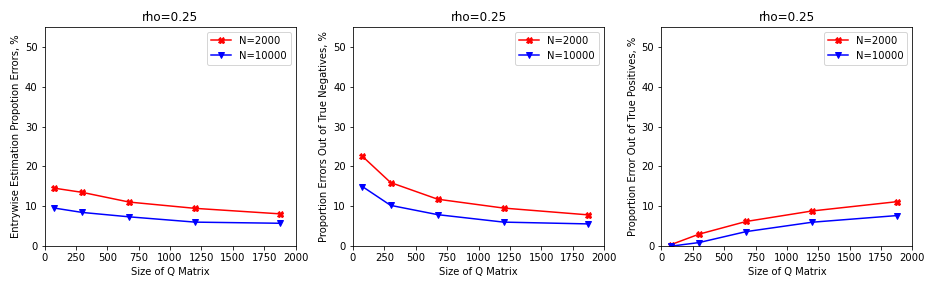}
  \end{subfigure}
  
  \begin{subfigure}{\linewidth}
    \centering
    \includegraphics[scale=0.45]{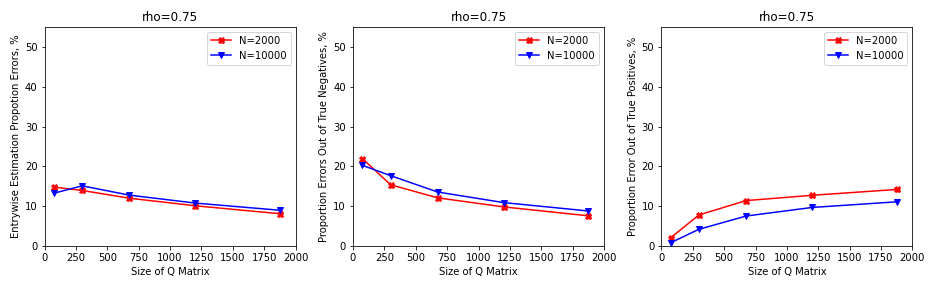}
  \end{subfigure}
  \caption{Plots of different performance metrics against the size of the $Q$-matrix for the GDINA data (dependent case with $\delta_{j,0}=0.1$, $p_{j}=0.9$ for all $j=1,...,J$). Row 1 and 2 correspond to correlation settings 0.25 and 0.75 respectively.}  
 \label{fig:GDinaerror_vs_Q_dep}
\end{figure}


\subsection{Attribute Classifications}\label{sec-implications-acc}

As discussed in Section \ref{sec-robust}, the marginal distributions of the attributes are mis-specified in RBMs, in which a conditional independent structure is assumed. However, in practice, the latent attributes are often highly correlated and the conditional independence assumption may not hold. 
This mis-specification in latent attribute distributions is expected to bring in additional errors in the estimated $Q$-matrix.
In order to understand the practical implications of the mis-specification in the estimated $Q$-matrix, we compare the commonly used attribute classification accuracy (ACC) rate obtained using the estimated $Q$-matrix ($\hat Q$) and the true $Q$-matrix ($Q$).
In particular, when there are $N$ examinees, the ACC of the $k$'th attribute is defined as $$ACC(k):=\frac{1}{N} \sum_{i=1}^N \vert \hat{\alpha}_{ik}-\alpha_{ik}\vert,$$
where $\hat{\alpha}_{ik}$ and $\alpha_{ik}$ represent the estimated and the true attribute values, respectively.

The simulation set-ups remain the same as the dependent settings in Section \ref{sec-simulation-studies}. All the DINA data, the ACDM data and the GDINA data are considered. 
Attribute classifications are performed using the estimated $\hat Q$ and the true $Q$ under the corresponding true underlying CDMs. 
The results are summarized in Table \ref{table: compare-ACC-RBM-CDM}.

Not surprisingly, we observe that the ACC rates obtained using $\hat Q$ are worse than that using $Q$ in all settings across all models.
The errors in $\hat Q$ stem from two sources, the mis-specification error in the latent attributes' marginal distribution and the sample estimation error.
On the other hand, we also note that the ACC rates obtained using $\hat Q$ do not deteriorate too much from using the true $Q$ when sample size is large, especially under the ACDM and GDINA models.
This suggests the $Q$-matrix estimation accuracy in the ACDM and GDINA models may be less prone to the mis-specification in the latent attributes' marginal distribution.
Furthermore, the ACC rates drop as the number of attributes increases in the model. This reflects the increasing difficulty in attribute classifications as the number of attributes increments.
Surprisingly, the ACC rates are generally higher when the correlation amongst attributes is higher. This may be because the higher dependency among the attributes results in fewer numbers of possible attribute patterns, making the estimation relatively easier. Not so surprisingly, we also observe that increasing sample size can in general help improve ACC rates. 

We also conduct simulation studies to explore the potential of using the proposed method to perform latent attribute classifications directly. 
The performance of the proposed method in attribute classifications is satisfactory. 
For more details on the additional simulation results, please refer to the Supplementary Materials.

\begin{table}[H]
\begin{center}
\begin{tabular}{c| c |p{1cm} p{1cm} |p{1cm} p{1cm} |p{1cm} p{1cm} |p{1cm} p{1cm}}
 \hline
\multicolumn{1}{c}{} & \multicolumn{1}{c}{}&\multicolumn{4}{|c}{$N=2000$}&\multicolumn{4}{|c}{$N=10000$} 
 \\
 \hline
\multicolumn{1}{c}{}&\multicolumn{1}{c}{}& \multicolumn{2}{|c}{$\rho=0.25$}&\multicolumn{2}{|c}{$\rho=0.75$}&\multicolumn{2}{|c}{$\rho=0.25$} &\multicolumn{2}{|c}{$\rho=0.75$} \\
 \hline
&Model&$\hat Q$&$Q$&$\hat Q$&$Q$&$\hat Q$&$Q$&$\hat Q$&$Q$ \\ 
 \hline
\multirow{3}{4em}{$K=5$}& DINA& 0.806 & 0.944& 0.888& 0.956& 0.830& 0.945& 0.890& 0.957\\
 &ACDM &0.812& 0.926& 0.911& 0.946& 0.921& 0.928& 0.915& 0.948\\
 &GDINA&0.918&0.928&  0.935&0.949&  0.928& 0.929& 0.947&0.950\\
 \hline
 \multirow{3}{4em}{$K=10$}& DINA&0.801& 0.939& 0.898& 0.954& 0.811& 0.940& 0.894& 0.956\\
 &ACDM &0.815& 0.922& 0.913& 0.946& 0.910& 0.925& 0.906& 0.950\\
 &GDINA&0.885&0.924& 0.939& 0.949& 0.926& 0.926& 0.899& 0.951\\
 \hline
\end{tabular}
\end{center}
\caption{Average ACC rates out of 100 repetitions for $K=5, 10$ attributes respectively obtained using the true CDMs. $\hat Q$ and $Q$ denote the estimated $Q$-matrix from the proposed method and the true $Q$-matrix respectively.}
\label{table: compare-ACC-RBM-CDM}
\end{table}

\section{Real Data Analysis}
\label{sec-real data analysis}
We apply our proposed method to a TIMSS data set. TIMSS provides data on the mathematics and science curricular achievement of the fourth and the eighth grade students across countries such as the U.S. The data set contains  23 mathematical items from TIMSS 2003 items and is packed in the CDM package in R \citep{robitzsch2020package}.
Both a binary scored examinees' response matrix and an associated expert constructed $Q$-matrix are included in the data set. In particular, the binary response matrix consists of 757 observations, and it is therefore of dimension 757 by 23. The $Q$-matrix on the other hand specifies how the 23 items are related to 13 binary mathematical skill attributes, as summarized in Table \ref{table: question_attributes}.

\begin{table}[h]
\small
\begin{tabular}{l|l}
 \hline
 Skill attributes & Items\\
 \hline
1. Understand concepts of a ratio and a unit rate and use language appropriately & 1, 7, 20 \\
\hline
2. Use ratio and rate reasoning to solve real world and mathematical problems &3, 11, 15, 19, 22\\
\hline
3. Compute fluently with multi-digit numbers
and find common factors and multiples& 12, 18\\
\hline
4. Apply and extend previous understandings of
numbers to the system of \\rational numbers&4, 17, 23\\
\hline
5. Apply and extend previous understandings of
arithmetic to algebraic expressions&8, 13, 16, 21\\
\hline
6. Reason about and solve one-variable equations and inequalities & 2, 5, 6,10,14\\
\hline
7. Recognize and represent proportional relationships between quantities& 3, 6\\
\hline
8. Use proportional relationships to solve multi-step ratio and percent problems&11\\
\hline
9. Apply and extend previous understandings of
operations with fractions to \\add, subtract, multiply, and divide rational numbers&4, 8, 18, 23\\
\hline
10. Solve real-life and mathematical problems
using numerical and algebraic \\expressions and
equations&5\\
\hline
11. Compare two fractions with different numerators and different denominators;\\ Understand a
fraction $a/b$ with as $a>1$ a sum of fractions $1/b$&1, 9, 18\\
\hline
12. Solve multi-step word problems posed with
whole numbers and having whole \\number answers using the four operations, including problems in which remainders \\must be interpreted.
Represent these problems using equations with a
letter standing \\for the unknown quantity; Generate a number or shape pattern that follows a given \\rule. Identify apparent features of the pattern that were not explicit in the rule itself&5 , 15\\
\hline
13. Use equivalent fraction as a strategy to add
and subtract fractions&1, 12, 18\\
\hline
\end{tabular}
\caption{Clusters of items according to the underlying skill attributes.}
\label{table: question_attributes}
\end{table}

\begin{figure}[H]
    \centering
    \includegraphics[scale=0.7]{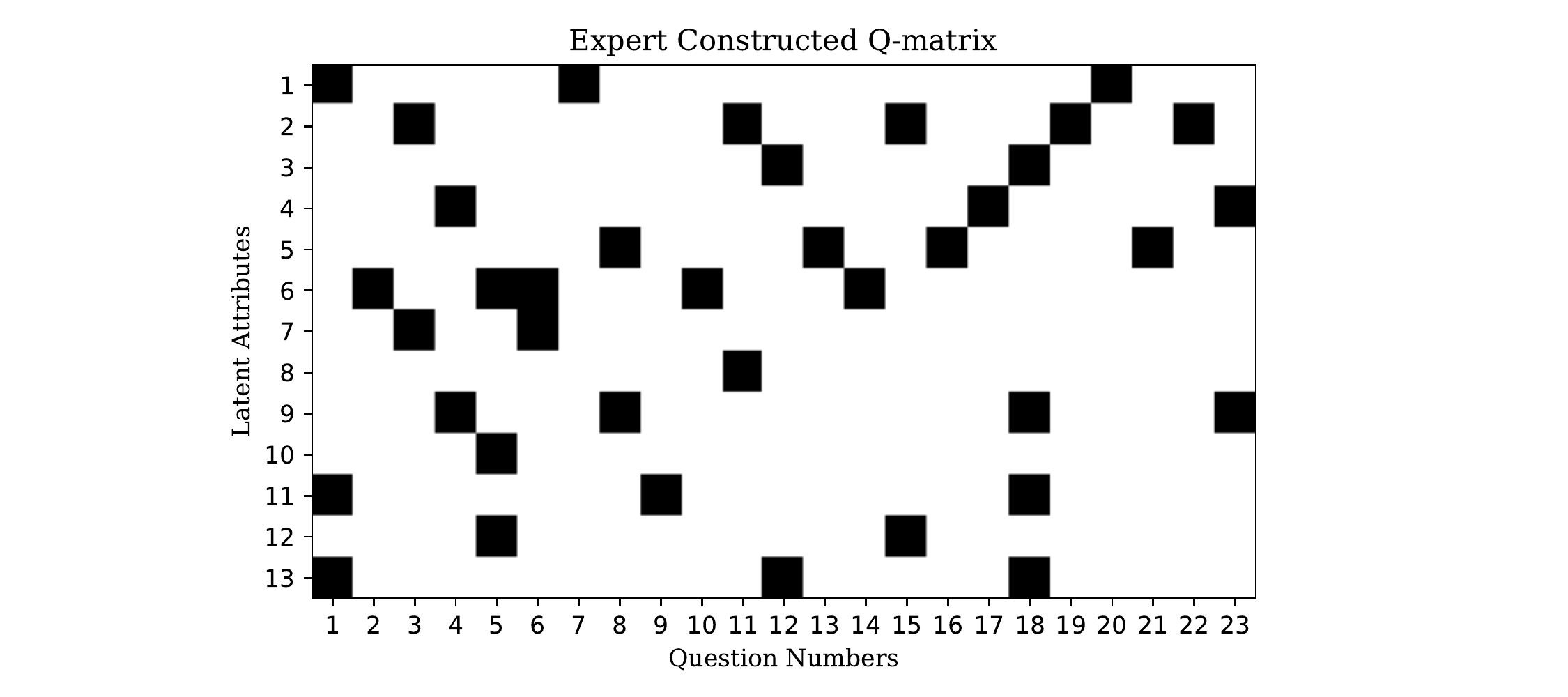}
  \caption{Heat-plot of the expert constructed $Q^0$. The white/black blocks correspond to $q^0_{ij}=0/1$ respectively.}  
 \label{fig:expert-Q}
\end{figure}

Note that the provided $Q$-matrix may not fully represent the ground truth because the construction of the $Q$-matrix by experts is almost always subjective.
In this case, the provided $Q$-matrix was constructed from the consensus of two experts. When they are not able to reach an agreement for any item through discussion, a third expert would step in to resolve the conflict. The percentage of two experts' overall agreement for the constructed $Q$-matrix is only 88.89\%, according to \cite{su2013hierarchical}. 
We denote this expert constructed $Q$-matrix as $Q^0$ and its $(i,j)$th entry as $q^0_{ij}.$
A heat-plot of $Q^0$ is summarized in Figure \ref{fig:expert-Q}. To demonstrate the practical implications of our proposed method, 
we  start with this expert constructed $Q^0$ and explore further whether our proposed method can improve on the quality of the $Q$-matrix to better represent the ground truth.

We initialize the weight matrix with $Q^0$ in our proposed method.
The estimated $Q$-matrix is denoted as $\hat{Q}$ and its $(i,j)$th entry as $\hat{q}_{ij}.$
If we treat the expert constructed $Q^0$ as the truth for evaluation purpose, then 
the entry-wise proportional ``error'' rate, the out of true positives ``error'' rate and out of true negatives ``error'' rate of $\hat{Q}$ are 
0.126, 0.053 and 0.139 respectively. 
The low ``error'' rates suggest $Q^0$ and $\hat{Q}$ are similar and our proposed method can indeed recover the main latent structure, especially the positive entries, in the expert constructed $Q^0$.

\begin{figure}[H]
    \centering
    \includegraphics[scale=0.7]{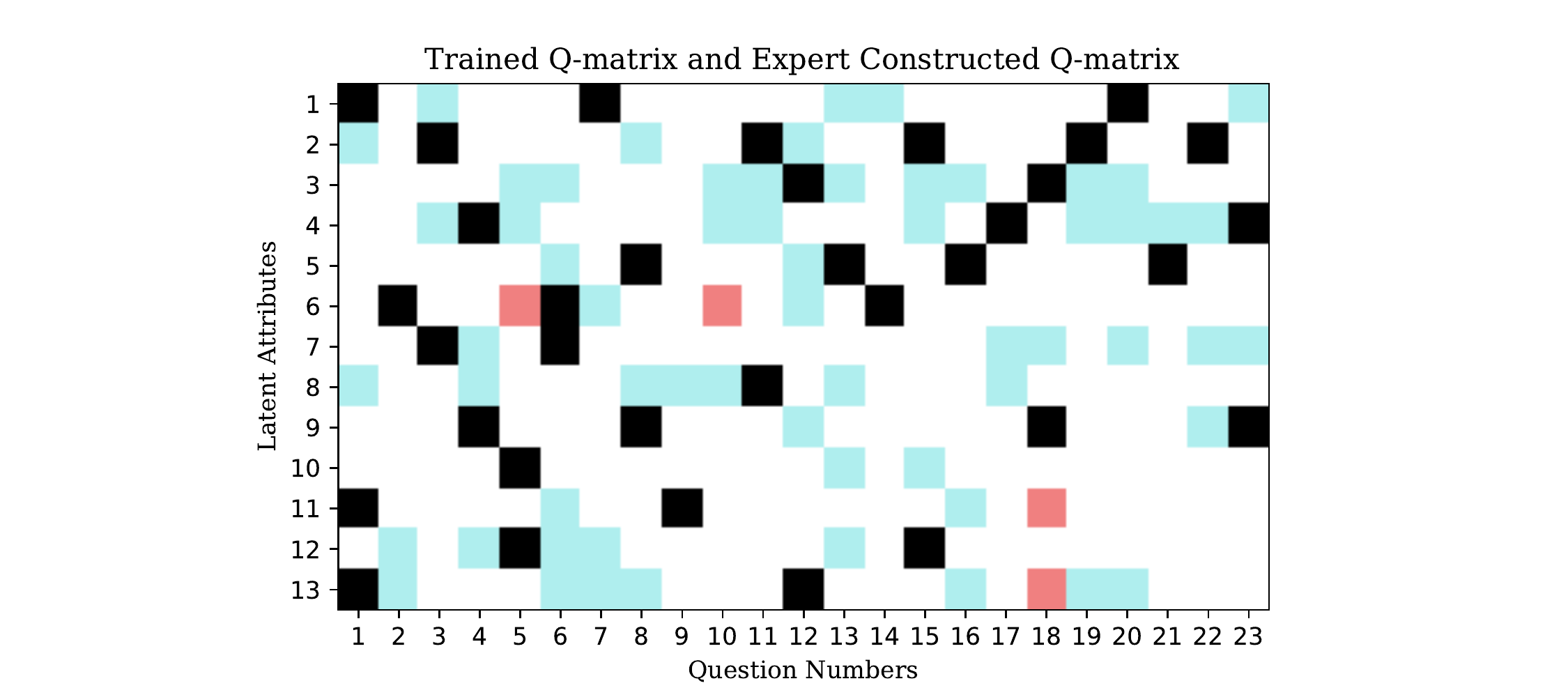}
  \caption{Heat-plot to compare between the estimated $\hat{Q}$ and the expert constructed $Q^0$. 
  The white blocks represent entries $(i,j)$ when both $\hat{q}_{ij}=q^0_{ij}=0.$
  The black blocks represent entries $(i,j)$ when both $\hat{q}_{ij}=q^0_{ij}=1.$
   The red blocks represent entries $(i,j)$ when $\hat{q}_{ij}=0$ and $q^0_{ij}=1.$
  The blue blocks represent entries $(i,j)$ when $\hat{q}_{ij}=1$ and $q^0_{ij}=0.$
 }
 \label{fig:compareQ}
\end{figure} 

Figure \ref{fig:compareQ} presents the heat-plot of the comparison between the estimated $\hat{Q}$ and the expert constructed $Q^0$. 
In particular, white and black entries represent the cases when $\hat{q}_{ij}=q^0_{ij}=0$ and when $\hat{q}_{ij}=q^0_{ij}=1$ respectively.
While blue and red entries represent the cases when $\hat{q}_{ij}=1, q^0_{ij}=0$ and when $\hat{q}_{ij}=0, q^0_{ij}=1$ respectively.
We see that the majority of the positive entries in $Q^0$ are picked up by $\hat{Q}$, and only 4 of them are predicted to be 0 in $\hat{Q}$, as represented by the red blocks in Figure \ref{fig:compareQ}.
This suggests the proposed method can estimate the $Q$-matrix with high sensitivity.
Some of these false negatives do make sense.
For example, item 5 describes three figures arranged in matchsticks with some patterns and asks for the total number of matchsticks that would be used to construct figure 10 if the pattern continues. 
It is a pattern recognition problem and does not seem to be closely related to attribute 6, ``reason about and solve one-variable equations and inequalities''.
 However, we acknowledge that this data driven approach can sometimes make mistakes. For example, the other three false negatives predicted may not make much sense. Take item 10 for example, which reads ``inequality equivalent to $x/3 > 8$''. It clearly requires the knowledge of attribute 6, which is not successfully identified by the proposed method.
On the other hand, the white regions, representing the agreed entry 0's, occupy the majority of the plot. This suggests the specificity is controlled.
Moreover, we see some blue blocks scattering in Figure \ref{fig:compareQ}, which represent the entries that are 0 in $Q^0$ but are predicted to be 1 in $\hat{Q}.$ 
Some of these blocks capture information that is neglected by the expert when constructing the $Q$-matrix. 
Take item 22 for example, whose description is ``At a play, $3/25$ of the people in the audience were children. What percent of audience is this?''
In the expert constructed $Q$-matrix, this item only requires mastering attribute 2. 
However, in our estimated $\hat{Q}$, this item is further related to attribute 4, ``understanding of rational numbers'', 7, ``recognizing proportional relationships'' and 9, ``applying operations with fractions''. 
 Nevertheless, we also want to point out that the proposed method may over-select, resulting in redundant attributes being selected. For example, item 8 reads ``If $x=3,$ what is the value of $-3x$''. The proposed method predicts that it is related to attribute 2, ``Use ratio and rate reasoning to solve real world and mathematical problems'', but in fact item 8 does not seem to be related to attribute 2.
Therefore, careful examination of the predicted entries is still needed, but it can potentially help to improve the quality of the $Q$-matrix.

We further compare the goodness-of-fit of $Q^0$ and $\hat{Q}$ across different CDMs, including the DINA, ACDM and GDINA models using both AIC and BIC as criteria.
We note that out of the three models tested, the ACDM gives the smallest values of both AIC and BIC. Moreover, using $\hat Q$ gives much smaller AIC (19348.71) than using $Q^0$ (19568.17) under the ACDM, which suggests the estimated $\hat Q$ fits better under the ACDM than the expert constructed $Q^0$ in terms of AIC.
On the other hand,
using $\hat Q$ achieves a BIC value of 20313.98, slightly worse than a BIC value of 20286.44 obtained by using $Q^0.$ However, the two values are comparable in size and the improvement is not significant.
Nonetheless, since we do not know what the true underlying model and the $Q$-matrix are, consultation to the domain experts is still needed to make assertive conclusions about which of $\hat Q$ and $Q^0$ is better.

\section{Discussions}
\label{sec-conclusion}
In conclusion, our proposed method using RBMs with $L_1$ penalty can achieve both fast and accurate learning of the large $Q$-matrices in different types of CDMs. 
This is shown by both the theoretical proofs developed in Section \ref{sec-robust} and the simulation studies carried out in Section \ref{sec-simulation-studies}. 
The real data analysis on TIMSS data set further suggests that our method can also work well in real world scenarios, and thus it would provide a powerful tool in large-scale exploratory cognitive diagnosis assessments.

We discuss some potential use cases of our proposed method.
One potential use case is to provide a reasonably accurate $Q$-matrix for cognitive diagnoses such as latent attribute classifications, when no $Q$-matrix or only an inaccurately specified $Q$-matrix is available. 
Depending on the accuracy requirements, the estimated $Q$-matrix can either be used directly in CDMs to perform latent attribute classifications or can serve as a starting point for domain experts for further refinement before use. 
Another potential use case is to provide a $Q$-matrix estimate for test item categorizations and enabling efficient design for future assessments. 
Similarly, whether the estimated $Q$-matrix can be used directly depends on the accuracy requirements in different real settings. 
To add reliability and confidence for direct usage, goodness-of-fit measures such as AIC or BIC can always be evaluated and compared between the estimated $Q$-matrix and the potentially inaccurate specified $Q$-matrix if it is available, as a first step. If the goodness-of-fit of the estimated $Q$-matrix is bad, then either the model used is not appropriate or the estimated $Q$-matrix is inaccurate. In these cases, consultation to domain experts is still necessary.
Nevertheless, our proposed method may help reduce the burden placed on the experts. 
Based on the estimated $Q$-matrix, if one finds out that additional items with specific $q$-vectors need to be included in the test, then it is likely such an item is indeed missing from the original test design. 
In this scenario, we recommend to include the additional item into the test design to keep safe.
Furthermore, in the case when the accuracy requirement is exceptionally high, our proposed method can still help. In this scenario, we recommend to set the penalty term to be 0 and apply CD Algorithm \ref{algo} to train the original RBM on the whole data set to obtain $\hat{\bm{W}}$. Then for each item $j$, experts can rank $\{\vert \hat w_{jk}\vert : k=1,...,K\}$ in a descending order first and pay more attention to those $\hat w_{jk}$ with large absolute values as those correspond to the $q_{jk}$ that are most likely to be 1's.


Note that by initializing the RBM parameters $\bm{W}$, $\bm{b}$ and $\bm{c}$ randomly, the proposed estimation method assumes no prior knowledge of the $Q$-matrix. 
In practice, we may have partial knowledge of the $Q$-matrix, using which we could potentially obtain a better initialization of the parameters. 
For example, we may have a pre-specified $Q$-matrix design with possible mis-specifications in some entries; in such cases, we can initialize the weight matrix $\bm{W}$ and the visible bias vector $\bm{b}$ based on our prior knowledge of the $Q$-matrix.  
Note that  $w_{j, k}$ in  $\bm{W}$ correspond to $\delta_{j,k}q_{j,k}$ in the ACDM.
From the perspective of initialization, we find what affects the learning accuracy most significantly are the signs of the initial values.
So, to keep things simple, we can initialize $\bm{W}$ with the partially available $Q$-matrix directly. For the visible biases, if the underlying model is believed to be the DINA model, by considering $\bm{\alpha}=\bm{0}$, we can derive $b_j=\log(g_j/(1-g_j))$. Under the ACDM or the GDINA model, we can obtain $b_j=\log(\delta_{j,0}/(1-\delta_{j,0}))$ using a similar argument. Though we do not know $g_j$ or $\delta_{j,0}$ in reality, very likely these values are between 0 and 0.5, in which case $b_j<0.$ It is therefore reasonable to initialize each $b_j$ from a Uniform$(-5, 0)$ distribution.
This would help improve the estimation accuracy.

Some limitations of our method include it does not take into account the interactions between the latent attributes due to the assumptions imposed on RBMs. 
In many real world scenarios, it is not uncommon that the latent attributes interact with one another and have joint effects on the distribution of the observed responses. 
One potential way to solve this problem is to apply deep Boltzmann machines (DBMs) to model the distribution of the responses. 
Since DBMs allow interactions between the latent attributes, it will capture the interactions between the latent attributes and take that into account.
Moreover, this paper focus more on the estimation part while inference on the estimated $Q$-matrix is not discussed. It would be interesting to pin down the asymptotic distributional form of this $Q$-matrix estimator to facilitate inferences such as hypothesis testing and constructing confidence intervals.

\section*{Acknowledgments}
The authors are grateful to the Editor-in-Chief Professor Matthias von Davier, an Associate
Editor, and three referees for their valuable comments and suggestions. This research is partially
supported by NSF CAREER SES-1846747, DMS-1712717, and SES-1659328.

\bibliography{bibliography}

\begin{thebibliography}{}

\bibitem[Bengio and Delalleau, 2009]{bengio2009justifying}
Bengio, Y. and Delalleau, O. (2009).
\newblock Justifying and generalizing contrastive divergence.
\newblock {\em Neural computation}, 21(6):1601--1621.

\bibitem[Carreira-Perpinan and Hinton, 2005]{carreira2005contrastive}
Carreira-Perpinan, M.~A. and Hinton, G.~E. (2005).
\newblock On contrastive divergence learning.
\newblock In {\em Aistats}, volume~10, pages 33--40. Citeseer.

\bibitem[Chen et~al., 2018]{Chen2018}
Chen, Y., Culpepper, S.~A., Chen, Y., and Douglas, J. (2018).
\newblock {B}ayesian estimation of the {DINA} {Q} matrix.
\newblock {\em Psychometrika}, 83(1):89--108.

\bibitem[Chen et~al., 2015]{chen_liu_Xu_ying2015}
Chen, Y., Liu, J., Xu, G., and Ying, Z. (2015).
\newblock Statistical analysis of {Q}-matrix based diagnostic classification
  models.
\newblock {\em Journal of the American Statistical Association},
  110(510):850--866.

\bibitem[Chiu, 2013]{Chiu2013}
Chiu, C.~Y. (2013).
\newblock Statistical refinement of the {Q}-matrix in cognitive diagnosis.
\newblock {\em Applied Psychological Measurement}, 37(8):598--618.

\bibitem[Choi et~al., 2015]{largelatentAttributes2}
Choi, K., Lee, Y.~S., and Park, Y.~S. (2015).
\newblock What {CDM} can tell about what students have learned: An analysis of
  {TIMSS} eighth grade mathematics.
\newblock {\em Eurasia Journal of Mathematics, Science and Technology
  Education}, 11:1563--1577.

\bibitem[Chung and Johnson, 2018]{Chung2018AnMA}
Chung, M. and Johnson, M.~S. (2018).
\newblock An {MCMC} algorithm for estimating the {Q}-matrix in a {B}ayesian
  framework.
\newblock {\em arXiv preprint arXiv:1802.02286}.

\bibitem[Collins et~al., 2008]{learningRate}
Collins, M., Globerson, A., Koo, T.~K., Carreras, X., and Bartlett, P.~L.
  (2008).
\newblock Exponentiated gradient algorithms for conditional random fields and
  max-margin markov networks.
\newblock {\em Journal of Machine Learning Research}, 9:1775--1822.

\bibitem[Culpepper, 2019]{7232852c3cdb4a2ebcac33aa88c1fedc}
Culpepper, S. (2019).
\newblock Estimating the cognitive diagnosis {Q} matrix with expert knowledge:
  Application to the fraction-subtraction dataset.
\newblock {\em Psychometrika}, 84(2):333--357.

\bibitem[de~la Torre, 2011]{GDINA}
de~la Torre (2011).
\newblock The generalized {DINA} model framework.
\newblock {\em Psychometrika}, 76(2):179--199.

\bibitem[de~la Torre and Chiu, 2016]{delatorie2016}
de~la Torre and Chiu, C.~Y. (2016).
\newblock A general method of empirical {Q}-matrix validation.
\newblock {\em Psychometrika.}, 81(2):253--73.

\bibitem[de~la Torre, 2008]{delatorre2008}
de~la Torre, J. (2008).
\newblock An empirically based method of {Q}-matrix validation for the {DINA}
  model: Development and applications.
\newblock {\em Journal of Educational Measurement}, 45(4):343--362.

\bibitem[de~la Torre et~al., 2018]{pyctest2}
de~la Torre, J., van~der Ark, L.~A., and Rossi, G. (2018).
\newblock Analysis of clinical data from a cognitive diagnosis modeling
  framework.
\newblock {\em Measurement and Evaluation in Counseling and Development},
  51(4):281--296.

\bibitem[DeCarlo, 2012]{DeCarlo2012}
DeCarlo, L.~T. (2012).
\newblock Recognizing uncertainty in the {Q}-matrix via a {B}ayesian extension
  of the {DINA} model.
\newblock {\em Applied Psychological Measurement}, 36(6):447–468.

\bibitem[García et~al., 2014]{educationassessment3}
García, P., Olea, J., and de~la Torre, J. (2014).
\newblock Application of cognitive diagnosis models to competency-based
  situational judgment tests.
\newblock {\em Psicothema.}, 26(3):372--7.

\bibitem[Gonz{\'a}lez and Wiberg, 2017]{gonzalez2017applying}
Gonz{\'a}lez, J. and Wiberg, M. (2017).
\newblock {\em Applying test equating methods}.
\newblock Springer, New York.

\bibitem[Gu and Xu, 2019]{identifiability}
Gu, Y. and Xu, G. (2019).
\newblock The sufficient and necessary condition for the identifiability and
  estimability of the {DINA} model.
\newblock {\em Psychometrika}, 84(2):468--483.

\bibitem[Gu and Xu, 2020a]{gu2018partial}
Gu, Y. and Xu, G. (2020a).
\newblock Partial identifiability of restricted latent class models.
\newblock {\em Annals of Statistics, to appear}.

\bibitem[Gu and Xu, 2020b]{gu2018sufficient}
Gu, Y. and Xu, G. (2020b).
\newblock Sufficient and necessary conditions for the identifiability of the
  {Q}-matrix.
\newblock {\em Statistica Sinica, to appear}.

\bibitem[Haertel, 1989]{haertel1989}
Haertel, E.~H. (1989).
\newblock {U}sing restricted latent class models to map the skill structure of
  achievement items.
\newblock {\em Journal of Educational Measurement}, 26(4):301--321.

\bibitem[Hartz, 2002]{Hartz2002}
Hartz, S. (2002).
\newblock A {B}ayesian framework for the unified model for assessing cognitive
  abilities: blending theory with practicality.
\newblock {\em Unpublished doctoral dissertation.}

\bibitem[Henson et~al., 2008]{LogCDM}
Henson, R.~A., Templin, J.~L., and Willse, J.~T. (2008).
\newblock Defining a family of cognitive diagnosis models using log-linear
  models with latent variables.
\newblock {\em Psychometrika}, 74(2):191.

\bibitem[Hinton, 2002]{hinton2002training}
Hinton, G.~E. (2002).
\newblock Training products of experts by minimizing contrastive divergence.
\newblock {\em Neural Computation}, 14(8):1771--1800.

\bibitem[Hinton and Salakhutdinov, 2006]{dimReductionRBM}
Hinton, G.~E. and Salakhutdinov, R.~R. (2006).
\newblock Reducing the dimensionality of data with neural networks.
\newblock {\em Science}, 313(5786):504--507.

\bibitem[Jiang et~al., 2018]{jiang2018convergence}
Jiang, B., Wu, T.-Y., Jin, Y., Wong, W.~H., et~al. (2018).
\newblock Convergence of contrastive divergence algorithm in exponential
  family.
\newblock {\em The Annals of Statistics}, 46(6A):3067--3098.

\bibitem[Junker and Sijtsma, 2001]{educationassessment1}
Junker, B.~W. and Sijtsma, K. (2001).
\newblock Cognitive assessment models with few assumptions, and connections
  with nonparametric item response theory.
\newblock {\em Applied Psychological Measurement}, 25(3):258--272.

\bibitem[Kuhn, 1955]{Kuhn2010TheHM}
Kuhn, H.~W. (1955).
\newblock The {H}ungarian method for the assignment problem.
\newblock {\em Naval Research Logistics Quarterly}, 2(1-2):83--97.

\bibitem[Larochelle and Bengio, 2008]{classificationRBM}
Larochelle, H. and Bengio, Y. (2008).
\newblock Classification using discriminative restricted {B}oltzmann machines.
\newblock In {\em Proceedings of the 25th International Conference on Machine
  Learning}, ICML '08, pages 536--543, New York, NY, USA. ACM.

\bibitem[Lee et~al., 2011]{largelatentAttributes1}
Lee, Y.~S., Park, Y.~S., and Taylan, D. (2011).
\newblock A cognitive diagnostic modeling of attribute mastery in
  {M}assachusetts, {M}innesota, and the {U.S.} national sample using the
  {TIMSS} 2007.
\newblock {\em International Journal of Testing}, 11:144--177.

\bibitem[Liu et~al., 2012]{Liu_Xu_Z2012}
Liu, J., Xu, G., and Ying, Z. (2012).
\newblock Data-driven learning of {Q}-matrix.
\newblock {\em Applied Psychological Measurement}, 36(7):548--564.

\bibitem[Long and Servedio, 2010]{long2010restricted}
Long, P.~M. and Servedio, R.~A. (2010).
\newblock Restricted {B}oltzmann machines are hard to approximately evaluate or
  simulate.
\newblock In {\em Proceedings of the 27th International Conference on
  International Conference on Machine Learning}, ICML’10, page 703–710,
  Madison, WI, USA. Omnipress.

\bibitem[MacKay, 2001]{mackay2001failures}
MacKay, D. (2001).
\newblock Failures of the one-step learning algorithm.
\newblock In {\em Available electronically at http://www. inference. phy. cam.
  ac. uk/mackay/abstracts/gbm. html}.

\bibitem[Robitzsch et~al., 2020]{robitzsch2020package}
Robitzsch, A., Kiefer, T., George, A.~C., Uenlue, A., and Robitzsch, M.~A.
  (2020).
\newblock Package ‘cdm’.
\newblock {\em Handbook of diagnostic classification models. New York:
  Springer}.

\bibitem[Rosasco, 2009]{Rosasco2009}
Rosasco, L. (2009).
\newblock Sparsity based regularization.
\newblock {\em MIT class notes.}

\bibitem[Salakhutdinov et~al., 2007]{collaborativeFilteringRBM}
Salakhutdinov, R., Mnih, A., and Hinton, G. (2007).
\newblock Restricted {B}oltzmann machines for collaborative filtering.
\newblock In {\em Proceedings of the 24th International Conference on Machine
  Learning}, ICML '07, pages 791--798, New York, NY, USA. ACM.

\bibitem[Schlueter, 2014]{RBMDerivations}
Schlueter, J. (2014).
\newblock Restricted {B}oltzmann machine derivations.
\newblock {\em Notes}.

\bibitem[Smolensky, 1986]{RBM}
Smolensky, P. (1986).
\newblock Information processing in dynamical systems: Foundations of harmony
  theory.
\newblock Technical report, Colorado University at Boulder Department of
  Computer Science.

\bibitem[Su et~al., 2013]{su2013hierarchical}
Su, Y.-L., Choi, K., Lee, W., Choi, T., and McAninch, M. (2013).
\newblock Hierarchical cognitive diagnostic analysis for timss 2003
  mathematics.
\newblock {\em Centre for Advanced Studies in Measurement and Assessment},
  35:1--71.

\bibitem[Sutskever and Tieleman, 2010]{sutskever2010convergence}
Sutskever, I. and Tieleman, T. (2010).
\newblock On the convergence properties of contrastive divergence.
\newblock In {\em Proceedings of the thirteenth international conference on
  artificial intelligence and statistics}, pages 789--795.

\bibitem[Templin and Henson, 2006]{Templin2006}
Templin, J. and Henson, R. (2006).
\newblock Measurement of psychological disorders using cognitive diagnosis
  models. psychological methods, 11(3), 287-305.
\newblock {\em Psychological methods}, 11:287--305.

\bibitem[Tsuruoka et~al., 2009]{L1}
Tsuruoka, Y., Tsujii, J., and Ananiadou, S. (2009).
\newblock Stochastic gradient descent training for {L}1-regularized log-linear
  models with cumulative penalty.
\newblock In {\em Proceedings of the Joint Conference of the 47th Annual
  Meeting of the ACL and the 4th International Joint Conference on Natural
  Language Processing of the AFNLP: Volume 1-Volume 1}, pages 477--485.
  Association for Computational Linguistics.

\bibitem[von Davier, 2005]{von2005}
von Davier, M. (2005).
\newblock A general diagnostic model applied to language testing data ({ETS}
  research report {RR}-05-16).
\newblock {\em Princeton: Educational Testing Service.}

\bibitem[von Davier, 2008]{education2}
von Davier, M. (2008).
\newblock A general diagnostic model applied to language testing data.
\newblock {\em British Journal of Mathematical and Statistical Psychology},
  61(2):287--307.

\bibitem[Wu et~al., 2016]{disease}
Wu, Z., Deloria-Knoll, M., and Zeger, S.~L. (2016).
\newblock {Nested partially latent class models for dependent binary data;
  estimating disease etiology}.
\newblock {\em Biostatistics}, 18(2):200--213.

\bibitem[Xu, 2017]{xu2017}
Xu, G. (2017).
\newblock Identifiability of restricted latent class models with binary
  responses.
\newblock {\em Annals of Statistics}, 45(2):675--707.

\bibitem[Xu and Shang, 2018]{XuEstimateQmatrix}
Xu, G. and Shang, Z. (2018).
\newblock Identifying latent structures in restricted latent class models.
\newblock {\em Journal of the American Statistical Association},
  113(523):1284--1295.

\bibitem[Yuille, 2004]{yuille2004convergence}
Yuille, A.~L. (2004).
\newblock The convergence of contrastive divergences.
\newblock {\em Advances in neural information processing systems},
  17:1593--1600.

\end{thebibliography}
\end{document}


\pagenumbering{arabic}
\maketitle

This file contains additional simulation results in Section \ref{sec-appendix-sim} and the proofs of all lemmas and propositions in Section \ref{sec-appendix-proofs}.

\section{Additional Simulation Studies}\label{sec-appendix-sim}
\subsection{Estimating Randomly Sampled Q-Matrix}\label{sec-appendix-randomQ}

In this section, we consider randomly sampled $Q$-matrix in a way that can simulate potentially more challenging scenarios. 
In specific, we include the one-, two- and three-attribute item designs. The exact construction of the $Q$-matrix is as follows. Similar to the construction in the main article, we still fix the dimension of the $Q$-matrix to be $3K$ by $K$, i.e. $3K$ items with $K$ attributes. 
For each row $j$, we first determine which item design it will take by a random sampling scheme. Let $M={K \choose 1}+{K \choose 2}+{K \choose 3}.$ 
The number of required attributes (denoted by $n$) for each item is randomly sampled from $\{1,2,3\}$ with probabilities $\{{K \choose 1}/M, {K \choose 2}/M, {K \choose 3}/M\}$.
Then, $n$ attributes are sampled without replacement from $\{1,2,...,K\}$ with equal probabilities, the corresponding entries in $\mathbf{q}_j$ will be set to 1 and the rest to 0.
Note that this random construction of the $Q$-matrix would somewhat simulate the extreme situations where the easiest learned one-attribute items will be sampled with the smallest probabilities. For example, when $K=15$, the probability to select a one-attribute item is only 0.0261. Furthermore, we also point out that under this random design, there will be a high chance the sampled $Q$-matrix is not identifiable, making the estimation even more difficult.
100 replications for each of $K=5, 10, ..., 25$ are considered and the average results are presented in Figure \ref{fig:randomQ}. For illustration purpose, we only consider the settings when $N=2000$ and when the attributes are independent, for the DINA, the ACDM, and a mixture of the DINA, ACDM, and DINO data. For the data from a mixture of three models, the data are generated from the DINA, ACDM and DINO models with proportions 0.35, 0.35, and 0.3 respectively,  respectively.
All the other set-ups remain the same as the independent settings in Section 4 of the main article.



From Figure \ref{fig:randomQ}, we can observe that the OE's of our proposed method remain controlled for three types of data.
However, we can also see that the OE's worsen and the OTP's become much more volatile compared to the fixed $Q$-matrix design in Section 4 of the main article. This is not surprising because of the increased difficulty in the design where the $Q$-matrices contain more two- and three-attribute items and the number of non-identifiable $Q$-matrices increases significantly.
In line with our observations in the main article, we also observe the increased uncertainty level impact most negatively on the OTP.
However, overall, the proposed method still possesses certain degrees of learning power of the $Q$-matrix even in such extreme situations.

\begin{figure}[H]
  \centering
  \begin{subfigure}{\linewidth}
    \centering
    \includegraphics[scale=0.45]{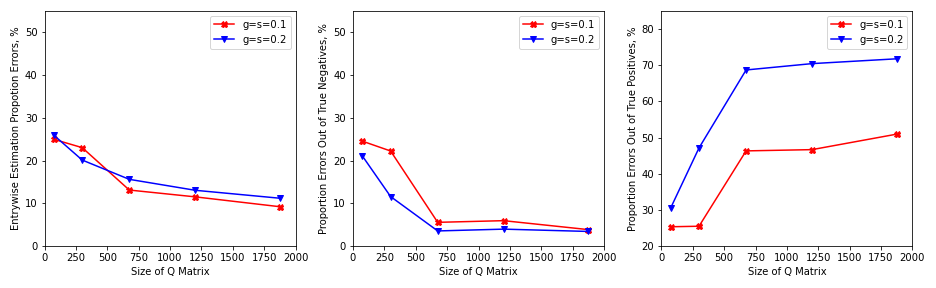}
  \end{subfigure}

  \begin{subfigure}{\linewidth}
    \centering
    \includegraphics[scale=0.45]{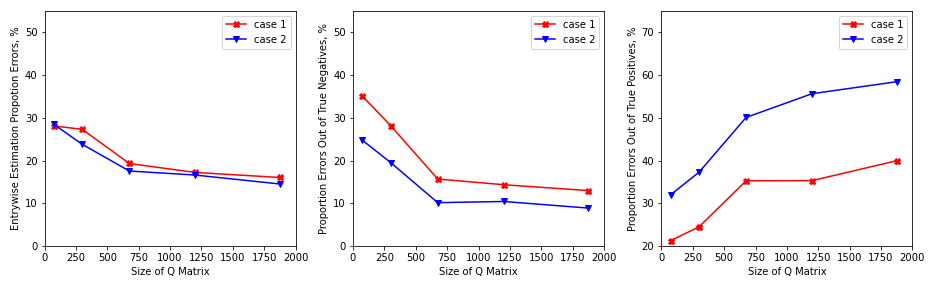}
  \end{subfigure}
  
  \begin{subfigure}{\linewidth}
    \centering
    \includegraphics[scale=0.45]{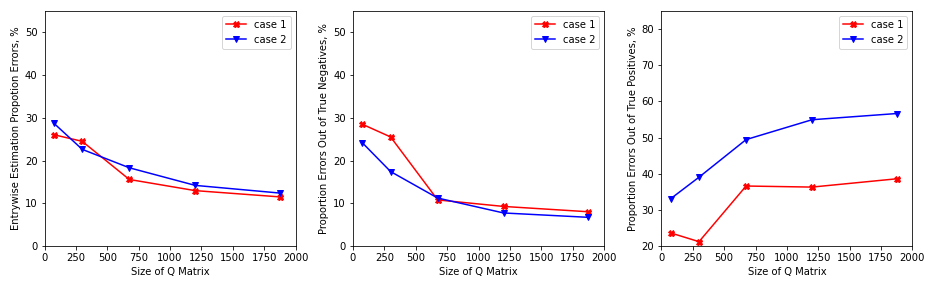}
  \end{subfigure}
  \caption{Plots of different performance metrics against the sizes of the $Q$-matrix. Rows 1 to 3 correspond to the DINA data, the ACDM data and a mixture of the DINA, ACDM and DINO data, respectively. 
  For the DINA and DINO data, two uncertainty levels are represented by $g_j=s_j=0.1$ and $g_j=s_j=0.2$ for all items $j$, where subscripts $j$ are omitted in the legends.
  For both the ACDM data and the GDINA data,  cases 1 and 2 represent the settings when $\delta_{j,0}=0.1$, $p_j=0.9$ and $\delta_{j,0}=0.2$, $p_j=0.8$ for all $j=1,...,J$ respectively.}  
 \label{fig:randomQ}
\end{figure}  

\subsection{Attribute Classifications in Correlated Settings}\label{sec-appendix-ACC}

In this section, we explore the potential of our proposed method in learning the  latent attribute patterns. 
As discussed in the main article, the marginal distributions of the latent attributes are mis-specified in RBMs. Therefore, we would like to explore to what extent our proposed method can perform latent attribute classifications directly when the conditional independence assumption is intensely violated. 
Similarly, ACC rate is used to assess the performance. Recall that the ACC of the $k$'th attribute is defined as $$ACC(k):=\frac{1}{N} \sum_{i=1}^N \vert \hat{\alpha}_{ik}-\alpha_{ik}\vert,$$
where $\hat{\alpha}_{ik}$ and $\alpha_{ik}$ represent the estimated value and the true value respectively.

The simulation set-ups remain the same as the dependent settings in Section \ref{sec-simulation-studies} of the main article. 
The recovered latent attribute matrix corresponding to the optimal estimated $Q$-matrix is returned. All the DINA, ACDM and GDINA data are considered. For each of the 100 replications, the ACC rate for every attribute in each of the settings with $K=5, 10,...,25$ is evaluated. 
The setting-wise average ACC rate is evaluated by computing the average ACC for each attribute out of 100 repetitions first, and then averaging out of all the $K$ latent attributes for each settings of $K=5, 10, ..., 25.$
The results are summarized in Table \ref{table: ACC-all}. 

Overall, we can see that the proposed method performs well in attribute classifications with all ACC rates above 0.85. Furthermore, we also observe that the ACC rates drop as the number of attributes increases in the model.  The attribute patterns would increase as the number of attributes increments, making the estimation more difficult.
Similar to the observations made in the main article, we see the ACC rates are generally higher when the correlations amongst attributes are higher. 
We also point out that increasing sample size can in general improve ACC rates using the proposed method. The performance of the proposed method is better on the ACDM data and the GDINA data than on the DINA data. This is especially obvious when $K$ is relatively small at 5 and 10. This observation is in line with our discussions in Section \ref{sec-robust} of the main article.

\begin{table}[h!]
\small
\centering
\begin{tabular}{p{0.75cm} p{0.95cm} p{1.15cm}|p{0.75cm} p{0.95cm} p{1.15cm}|p{0.75cm} p{0.95cm} p{1.15cm}|p{0.75cm} p{0.95cm} p{1.15cm}}
 \hline
 \multicolumn{6}{c}{$N=2000$}&\multicolumn{6}{|c}{$N=10000$} 
 \\
 \hline
 \multicolumn{3}{c}{$\rho=0.25$}&\multicolumn{3}{|c}{$\rho=0.75$}&\multicolumn{3}{|c}{$\rho=0.25$} &\multicolumn{3}{|c}{$\rho=0.75$} \\
 \hline
DINA&ACDM&GDINA&DINA&ACDM&GDINA&DINA&ACDM&GDINA&DINA&ACDM&GDINA \\ 
 \hline
0.898&0.916&0.916&0.917&0.927&
0.924&0.903&0.916&0.917&0.918&0.932&0.931\\
 \hline
0.897&0.896&0.900&0.888&0.902&0.903&0.901&0.907&0.911&0.885&0.911&0.912\\
\hline
0.878& 0.876& 0.880& 0.880& 0.888& 0.893& 0.891&0.887& 0.893& 0.880& 0.897& 0.900\\
 \hline
0.875& 0.863& 0.869&0.879& 0.885& 0.889& 0.883& 0.879& 0.882&
       0.874& 0.894& 0.893\\
       \hline
0.866& 0.853&0.857& 0.875&0.883&0.887& 0.877& 0.868& 0.874&
       0.874& 0.887& 0.890 \\
 \hline
\end{tabular}
\caption{Average ACC rates for using RBM on the DINA data, the ACDM data and the GDINA data. Rows 1 to 5 correspond to the settings with $K=5, 10, ...,25$ respectively.}
\label{table: ACC-all}
\end{table}


\section{Proofs of Lemmas and Propositions}
\label{sec-appendix-proofs}
Before proving our main propositions 2.1 and 2.2, we first give a lemma which would be used in the proof of the main propositions. 
\begin{lemma}
Assume $\bm{\alpha}$ are independent and $\alpha_k \sim$ Ber$(p_k)$ for $k=1,...,K$. If true model with response R satisfies either the
GDINA model Equation (3) or the DINA model $P(R=1\mid\bm{\alpha})=g+(1-s-g)\alpha_{1}\alpha_{2}...\alpha_{K^{*}}$ for some $s,g$ satisfying $g<1-s$, then the mis-specified linear additive model of $R$ regressed on $(\alpha_{1}, \alpha_{2}, ..., \alpha_{K})$ has the corresponding mean function in the form of $\mathop{\mathbb{E}}^*[R\mid\bm{\alpha}]=\beta_0+\beta_{1}\alpha_{1}+\beta_{2}\alpha_{2}+...+\beta_{K}\alpha_{K}$ with $\beta_{k}=0$ for $k= K^{*}+1,...,K$.
\end{lemma}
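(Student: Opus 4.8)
The plan is to interpret $\mathbb{E}^*[R\mid\bm{\alpha}]$ as the population-level best linear predictor of $R$ over the span of $\{1,\alpha_1,\ldots,\alpha_K\}$, i.e. the coefficient vector $\bm{\beta}=(\beta_0,\beta_1,\ldots,\beta_K)$ solving the population least-squares problem $\min_{\bm{\beta}}\mathbb{E}\bigl[(R-\beta_0-\sum_{k=1}^K\beta_k\alpha_k)^2\bigr]$. The crucial structural fact I will exploit is that, because the attributes are assumed independent, the design covariance matrix $\mathrm{Cov}(\bm{\alpha})$ is diagonal, which decouples the normal equations and reduces each slope to a one-dimensional simple-regression expression.

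First I would write the first-order (normal) equations for the projection and separate out the intercept. After centering the predictors, the slope vector satisfies $\bm{\beta}_{1:K}=\mathrm{Cov}(\bm{\alpha})^{-1}\,\mathrm{Cov}(\bm{\alpha},R)$. Since $\alpha_k\sim\mathrm{Ber}(p_k)$ are mutually independent, $\mathrm{Cov}(\alpha_j,\alpha_k)=0$ for $j\ne k$ and $\mathrm{Var}(\alpha_k)=p_k(1-p_k)>0$, so $\mathrm{Cov}(\bm{\alpha})$ is diagonal and invertible. Consequently each coefficient collapses to
\begin{equation*}
\beta_k=\frac{\mathrm{Cov}(\alpha_k,R)}{\mathrm{Var}(\alpha_k)},\qquad k=1,\ldots,K,
\end{equation*}
so it suffices to show $\mathrm{Cov}(\alpha_k,R)=0$ for every $k=K^*+1,\ldots,K$.

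For the covariance step, the key observation is that in both the DINA specification $P(R=1\mid\bm{\alpha})=g+(1-s-g)\alpha_1\cdots\alpha_{K^*}$ and the GDINA model of Equation (3), the conditional mean $\mathbb{E}[R\mid\bm{\alpha}]=h(\alpha_1,\ldots,\alpha_{K^*})$ depends only on the required attributes $\alpha_1,\ldots,\alpha_{K^*}$. Using the tower property, for $k>K^*$ I would write $\mathbb{E}[\alpha_k R]=\mathbb{E}[\alpha_k\,h(\alpha_1,\ldots,\alpha_{K^*})]$, and then invoke independence: since $\alpha_k$ is independent of $(\alpha_1,\ldots,\alpha_{K^*})$ and hence of $h(\alpha_1,\ldots,\alpha_{K^*})$, this factorizes as $\mathbb{E}[\alpha_k]\,\mathbb{E}[h(\alpha_1,\ldots,\alpha_{K^*})]=\mathbb{E}[\alpha_k]\,\mathbb{E}[R]$. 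Therefore $\mathrm{Cov}(\alpha_k,R)=0$ and $\beta_k=0$ for all $k>K^*$, as claimed.

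I expect the main obstacle to be expository rather than technical: one must argue cleanly that the GDINA mean function genuinely involves only the required attributes, so that a single function $h$ serves both model families, and that conditional independence of the attributes is what lets one pass from ``$R$ does not involve $\alpha_k$ in its conditional mean'' to ``$\alpha_k$ is uncorrelated with $R$.'' The diagonality of $\mathrm{Cov}(\bm{\alpha})$ is precisely what makes the uncorrelatedness of a single predictor with $R$ sufficient to annihilate its regression coefficient; without independence among the attributes one could not discard the off-diagonal terms, and the vanishing of individual covariances would no longer imply vanishing coefficients.
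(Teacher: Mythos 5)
Your proposal is correct and follows essentially the same route as the paper's proof: independence of the attributes reduces each coefficient to the simple-regression ratio $\beta_k=\mathrm{Cov}(\alpha_k,R)/\mathrm{Var}(\alpha_k)$, and that covariance vanishes because under both the DINA and GDINA models the conditional mean of $R$ involves only $\alpha_1,\ldots,\alpha_{K^*}$, of which $\alpha_k$ (for $k>K^*$) is independent. The only difference is cosmetic: the paper kills the covariance via the Law of Total Covariance (conditioning on $\alpha_1,\ldots,\alpha_{K^*}$, with conditional independence annihilating one term and constancy of $\mathbb{E}[\alpha_k\mid\alpha_1,\ldots,\alpha_{K^*}]$ the other), whereas you factorize $\mathbb{E}[\alpha_k R]$ directly through the tower property, an equally valid and slightly more streamlined computation.
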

\begin{proof}[Proof of Lemma 1]
By the independence assumption and the linear regression theory, we have for $k=1,\ldots, K,$
\begin{align*}
 \beta_{k}&=\frac{1}{Var(\alpha_{k})}Cov\big(\alpha_{k}, R\big)\\ 
 &=\frac{1}{p_{k}(1-p_{k})}Cov\big(\alpha_{k}, R\big). 
\end{align*}
Denote $\alpha_{1,...,K^{*}}:=\{\alpha_1,...,\alpha_{K^{*}}\}$, then by the Law of Total Covariance, we have for $k=K^{*}+1,...,K$,
\begin{align*}
Cov\big(\alpha_{k}, R\big)=\mathop{\mathbb{E}}\big[Cov\big(\alpha_{k},R \mid \alpha_{1,...,K^{*}}\big)\big]+ Cov\big(\mathop{\mathbb{E}}\big[\alpha_{k}\mid\alpha_{1,...,K^{*}}\big], \mathop{\mathbb{E}}\big[R\mid\alpha_{1,...,K^{*}}\big]\big).
\numberthis \label{eq:cov3}
\end{align*}
Applying the independence assumption again, we have
\begin{align*}
Cov\big(\mathop{\mathbb{E}}\big[\alpha_{k}\mid \alpha_{1,...,K^{*}}\big], \mathop{\mathbb{E}}\big[R\mid\alpha_{1,...,K^{*}}\big]\big)&= Cov\big(p_{k}, \mathop{\mathbb{E}}[R \mid\alpha_{1,...,K^{*}}]\big)=0.
\end{align*}
Hence, we only need to consider the first term of (\ref{eq:cov3}). Referring to Figure \ref{fig:conditional_independence}, we know that in both the DINA and the GDINA model setting, $R \perp \!\!\! \perp \alpha_{k} \mid \alpha_{1,...,K^{*}}$ for all $k=K^{*}+1,...,K$.
\begin{align*}
\mathop{\mathbb{E}}\big[Cov\big(\alpha_{k},R\mid\alpha_{1,...,K^{*}}\big)\big]&=0.
\end{align*}
Therefore,
\[
\beta_{k}=\frac{0}{p_{k}(1-p_{k})}=0 \quad \forall k=K^{*}+1,...,K.
\]
\end{proof}

\begin{figure}[H]
\centerline{\includegraphics[scale=0.55]{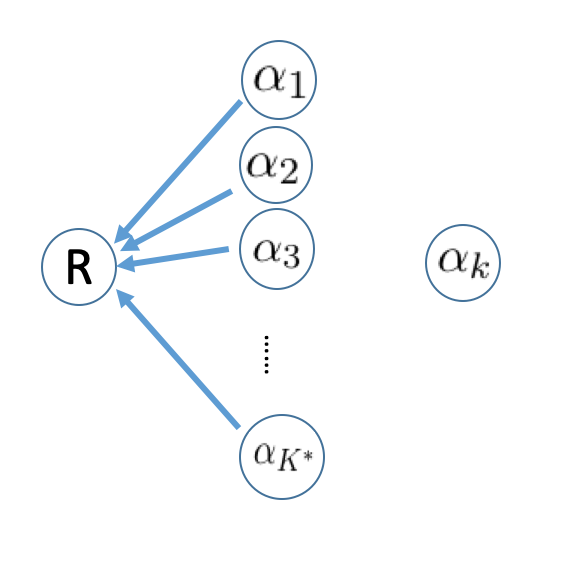}}
\centering
\caption{Illustration of the conditional independence relationship between $R$ and $\alpha_{k}$ given $\alpha_1,...,\alpha_{K^{*}}$ for all $k=K^{*}+1,...,K$}.
\label{fig:conditional_independence}
\end{figure}


Next we give the proofs of our main propositions.

\begin{proof}[Proof of Proposition 1]
First note that by Lemma $1$, we have $\beta_{k}=0$ for $k= K^{*}+1,...,K$.

In the DINA setting, we have
\begin{equation*}
    P(R=1 \mid \bm{\alpha})=
    \begin{cases}
      1-s \quad \text{if}\ \bm{\alpha} \succcurlyeq \mathbf{1}_{K^{*}} \\
      g \quad\text{otherwise},
    \end{cases}
  \end{equation*}
or,
\begin{equation}
    R\mid\bm{\alpha} \sim
    \begin{cases}
      \text{Ber}(1-s) & \text{if}\ \bm{\alpha} \succcurlyeq \mathbf{1}_{K^{*}} \\
      \text{Ber}(g) & \text{otherwise}.
    \end{cases}
  \end{equation}

Under the independence condition, for any $k=1,...,K^{*}$, we have
\begin{align*}
 \beta_{k}&=\frac{1}{Var(\alpha_{k})}Cov\big(\alpha_{k}, R\big) 
  =\frac{1}{p_{k}(1-p_{k})}Cov(\alpha_{k}, R). 
\end{align*}
Consider the following two events which partition the sample space of $\bm{\alpha}$, \\$E_{0,k}:=\big\{\alpha_{1},...,\alpha_{k-1},\alpha_{k+1},...,\alpha_{K^{*}}\mid\prod_{i=1,i\neq k}^{K^{*}}\alpha_{i} =0\big\}$ and $E_{1,k}:=\big\{\alpha_{1},...,\alpha_{k-1},\alpha_{k+1},...,\alpha_{K^{*}}\mid\prod_{i=1,i\neq k}^{K^{*}}\alpha_{i} =1\big\}$. Denote $\alpha_{1,...,K^{*} \setminus k}:=\big\{\alpha_{1},...,\alpha_{k-1},\alpha_{k+1},...,\alpha_{K^{*}}\big\}$.
By the Law of Total Covariance, we have
\begin{align*}
Cov\big(\alpha_{k}, R\big)=\mathop{\mathbb{E}}\big[Cov\big(\alpha_{k},R\mid\alpha_{1,...,K^{*} \setminus k}\big)\big]+ Cov\big(\mathop{\mathbb{E}}\big[\alpha_{k}|\alpha_{1,...,K^{*} \setminus k}\big], \mathop{\mathbb{E}}\big[R\mid\alpha_{1,...,K^{*} \setminus k}\big]\big).
\numberthis \label{eq:cov}
\end{align*}
 Applying the independence condition,
\begin{align*}
Cov\big(\mathop{\mathbb{E}}\big[\alpha_{k}\mid\alpha_{1,...,K^{*} \setminus k}\big], \mathop{\mathbb{E}}\big[R\mid\alpha_{1,...,K^{*} \setminus k}\big]\big)
&= Cov\big(p_{k}, \mathop{\mathbb{E}}\big[R\mid\alpha_{1,...,K^{*} \setminus k}\big]\big)=0.
\end{align*}
Hence, we only need to consider the first term of (\ref{eq:cov}),
\begin{align*}
\mathop{\mathbb{E}}\big[Cov\big(\alpha_{k},R\mid\alpha_{1,...,K^{*} \setminus k }\big)\big]&=\mathop{\mathbb{E}}\big[\mathop{\mathbb{E}}\big[\alpha_{k}R\mid\alpha_{1,...,K^{*} \setminus k}\big]-\mathop{\mathbb{E}}\big[\alpha_{k}\mid\alpha_{1,...,K^{*} \setminus k}\big]\cdot \mathop{\mathbb{E}}\big[R\mid\alpha_{1,...,K^{*} \setminus k}\big]\big].
\numberthis \label{eq:first-term}
\end{align*}
For a fixed $k$, define another two events: $E_{2,k}:=\big\{\bm{\alpha}\mid\alpha_{k}=0\big\}$ and $E_{3,k}:=\big\{\bm{\alpha}\mid\alpha_{k}=1\big\}$. Then in the event of $E_{0,k}$,
\begin{align*}
(\ref{eq:first-term})&= \mathop{\mathbb{E}}\big[\mathop{\mathbb{E}}\big[\alpha_{k}R\mid E_{0,k}\big]-\mathop{\mathbb{E}}\big[\alpha_{k}\mid E_{0}\big] \mathop{\mathbb{E}}\big[R\mid E_{0,k}\big]\big]\\
&=\mathop{\mathbb{E}}\big[\mathop{\mathbb{E}}\big[\alpha_k R\mid E_{0,k}, E_{3,k}\big] P(E_{3,k})+\mathop{\mathbb{E}}\big[\alpha_k R\mid E_{0,k}, E_{2,k}\big] P(E_{2,k})-\mathop{\mathbb{E}}\big[\alpha_{k}] \mathop{\mathbb{E}}\big[R\mid E_{0,k}\big]\big]\\
&=\mathop{\mathbb{E}}\big[g\cdot p_{k}-p_{k} \cdot g\big]\\
&=0.
\end{align*}
In the event of $E_{1,k}$,
\begin{align*}
(\ref{eq:first-term})=& \mathop{\mathbb{E}}\big[\mathop{\mathbb{E}}\big[\alpha_{k}R\mid E_{1,k}\big]-\mathop{\mathbb{E}}\big[\alpha_{k}\mid E_{1,k}\big] \mathop{\mathbb{E}}\big[R\mid E_{1,k}\big]\big]\\
=&\mathop{\mathbb{E}}\big[\mathop{\mathbb{E}}\big[\alpha_k R\mid E_{1,k}, E_{3,k}] P(E_{3,k})+\mathop{\mathbb{E}}\big[\alpha_k R\mid E_{1,k}, E_{2,k}] P(E_{2,k})\\
&-\mathop{\mathbb{E}}\big[\alpha_{k}\big]\cdot \mathop{\mathbb{E}}\big[R\mid E_{1,k}, E_{3,k}\big]\cdot P(E_{3,k})-\mathop{\mathbb{E}}\big[\alpha_{k}\big]\cdot \mathop{\mathbb{E}}\big[R\mid E_{1,k}, E_{2,k}\big]\cdot P(E_{2,k})\big]\\
=&\mathop{\mathbb{E}}\big[(1-s) p_{k}+0-p_{k}(1-s) p_{k}-p_{k} g (1-p_{k})\big]\\
=&p_{k}(1-p_{k})(1-s-g).
\end{align*}
Since the above reasoning works for any $k=1,2,...,K^{*}$, we must have for each $k=1,2,...,K^{*}$,
\begin{align*}
\beta_{k}&=\frac{1}{p_{k}(1-p_{k})}Cov\big(\alpha_{k}, R\big) \\
&=\frac{1}{p_{k}(1-p_{k})}\big(0 \cdot P(E_{0,k})+p_{k}(1-p_{k})(1-s-g)\cdot P(E_{1,k})\big)\\
&=(1-s-g)\prod_{i=1,i\neq k}^{K^{*}}p_{i}\\
&\neq 0.
\end{align*}
\end{proof}


\begin{proof}[Proof of Proposition 2]
Note that by Lemma $1$, we have $\beta_{k}=0$ for $k= K^{*}+1,...,K$.

\noindent Under the independence condition, for any $k=1,...,K^{*}$, we have
\begin{align*}
 \beta_{k}&=\frac{1}{Var(\alpha_{k})}Cov\big(\alpha_{k}, R\big)\\ 
 &=\frac{1}{p_{k}(1-p_{k})}Cov(\alpha_{k}, R). 
 \numberthis \label{eq:beta_k}
\end{align*}
Denote $S:=\big\{1,2,3,...,K^{*}\big\}$. We consider the following 
$2^{K^*}$ events: $E_{0}:=\big\{\bm{\alpha}\mid \alpha_{l}=0, \forall l \in S\}$, $E_{1,i}:= \big\{\bm{\alpha}\mid \alpha_{i}=1, \alpha_{j} = 0, \forall j\neq i \in S\big\}$ for some $i \in S$ (i.e. events that only one of the required variables taking value of $1$ and all others being $0$), $E_{2, (i,j)}:= \big\{\bm{\alpha}\mid \alpha_{i}=\alpha_{j}=1, \alpha_{k} = 0, \forall k\neq i,j \in S\big\}$ for some $i \neq j \in S$ (i.e. events that any two of the required variables are $1$ and all others being $0$), ..., $E_{K^{*}}:=\big\{\bm{\alpha}\mid \alpha_{l}=1, \forall l \in S\big\}$. Note that $E_{0}, E_{1, i}$ for $i \in S$, $E_{2, (i,j)}$ for some $i\neq j \in S$, ..., $E_{K^{*}}$ partition the sample space of $\bm{\alpha}$. The response $R$ would have the following distribution.
\begin{equation}
    R|\bm{\alpha} \sim
    \begin{cases}
      \text{Ber}(\delta_{0}) & \text{if}\ E_{0} \\
       \text{Ber}(\delta_{0}+\delta_{i}) & \text{if}\ E_{1, i}\\
       \text{Ber}(\delta_{0}+\delta_{i}+\delta_{j}+ \delta_{i, j}) & \text{if}\ E_{2, (i,j)}\\
      ...\\
       \text{Ber}\big(\delta_{0}+\sum_{k=1}^{K^{*}}\delta_{k}+...+\delta_{12...K^{*}}\big)&\text{if} \ E_{K^{*}}.
    \end{cases}
  \end{equation}
By the Law of Total Covariance, we have
\begin{align*}
Cov\big(\alpha_{k}, R\big)=\mathop{\mathbb{E}}\big[Cov\big(\alpha_{k},R\mid\alpha_{1,...,K^{*} \setminus k}\big)\big]+ Cov\big(\mathop{\mathbb{E}}\big[\alpha_{k}|\alpha_{1,...,K^{*} \setminus k}\big], \mathop{\mathbb{E}}\big[R\mid\alpha_{1,...,K^{*} \setminus k}\big]\big).
\numberthis \label{eq:cov2}
\end{align*}
Similar to the DINA case, we also have
\begin{align*}
Cov\big(\mathop{\mathbb{E}}\big[\alpha_{k}\mid\alpha_{1,...,K^{*} \setminus k}\big], \mathop{\mathbb{E}}\big[R\mid\alpha_{1,...,K^{*} \setminus k}\big]\big)
&= Cov\big(p_{k}, \mathop{\mathbb{E}}\big[R\mid\alpha_{1,...,K^{*} \setminus k}\big]\big)=0.
\end{align*}
Hence, we only need to consider the first term of (\ref{eq:cov2}),
\begin{align*}
\mathop{\mathbb{E}}\big[Cov\big(\alpha_{k},R\mid\alpha_{1,...,K^{*} \setminus k }\big)\big]&=\mathop{\mathbb{E}}\big[\mathop{\mathbb{E}}\big[\alpha_{k}R\mid\alpha_{1,...,K^{*} \setminus k}\big]-\mathop{\mathbb{E}}\big[\alpha_{k}\mid\alpha_{1,...,K^{*} \setminus k}\big]\cdot \mathop{\mathbb{E}}\big[R\mid\alpha_{1,...,K^{*} \setminus k}\big]\big].
\numberthis \label{eq:first-term2}
\end{align*}
Fix a $k \in S$. Let $S':=\big\{1,2,...,k-1,k+1,...,K^{*}\big\}$. We can define new 
$2^{K^* - 1}$ events: $E_{0}^{*}:=\big\{\alpha_{1,...,K^{*} \setminus k}\mid \alpha_{l}=0 \quad \forall l \in S'\}$, $E_{1, i}^{*}:=\big\{\alpha_{1,...,K^{*} \setminus k}\mid \alpha_{i}=1, \alpha_{l}=0, \forall l \neq i \in S'\big\}$ for some $i \in S'$, $E_{2, (i,j)}^{*}:=\big\{\alpha_{1,...,K^{*} \setminus k}\mid \alpha_{i}=\alpha_{j}=1, \alpha_{l}=0, \forall l \neq i,j \in S'\big\}$ for some $i\neq j \in S'$,..., $E_{K^{*}-1}^{*}:=\big\{\alpha_{1,...,K^{*} \setminus k}\mid \alpha_{l}=1 \quad \forall l \in S'\big\}$. And define $E_{0}':=\big\{\bm{\alpha}\mid \alpha_{k}=0\big\}$ and $E_{1}':=\big\{\bm{\alpha}\mid\alpha_{k}=1\big\}$.

\noindent In the event of $E_{0}^{*}$,
\begin{align*}
(\ref{eq:first-term2})=&\mathop{\mathbb{E}}\big[\mathop{\mathbb{E}}\big[\alpha_{k}R\mid E_{0}^{*}\big]-\mathop{\mathbb{E}}\big[\alpha_{k}\mid E_{0}^{*}\big] \mathop{\mathbb{E}}\big[R\mid E_{0}^{*}\big]\big]\\
=&\mathop{\mathbb{E}}\big[\mathop{\mathbb{E}}\big[\alpha_{k}R\mid E_{0}^{*}, E_{1}'\big] P(E_{1}')+\mathop{\mathbb{E}}\big[\alpha_{k}R\mid E_{0}^{*}, E_{0}'\big] P(E_{0}')\\
&-\mathop{\mathbb{E}}\big[\alpha_{k}\big]\mathop{\mathbb{E}}\big[R\mid E_{0}^{*},E_{1}'\big] P(E_{1}')-\mathop{\mathbb{E}}\big[\alpha_{k}\big]\mathop{\mathbb{E}}\big[R\mid E_{0}^{*},E_{0}'\big] P(E_{0}')\\
=&\mathop{\mathbb{E}}\Big[(\delta_{0}+\delta_{k})p_{k}+(1-p_{k})\cdot 0 - (\delta_{0}+\delta_{k})p_{k}^{2}-\delta_{0}(1-p_{k})p_{k}\Big]\\
=&p_{k}(1-p_{k})\delta_{k}.
\end{align*}
In the event of $E_{1, i}^{*}$ for some $i \in S'$,
\begin{align*}
(\ref{eq:first-term2})=&\mathop{\mathbb{E}}\big[\mathop{\mathbb{E}}\big[\alpha_{k}R\mid E_{1, i}^{*}\big]-\mathop{\mathbb{E}}\big[\alpha_{k}\mid E_{1, i}^{*}\big] \mathop{\mathbb{E}}\big[R\mid E_{1, i}^{*}\big]\big]\\
=&\mathop{\mathbb{E}}\big[\mathop{\mathbb{E}}\big[\alpha_{k}R\mid E_{1, i}^{*}, E_{1}'\big]P(E_{1}')+\mathop{\mathbb{E}}\big[\alpha_{k}R\mid E_{1, i}^{*}, E_{0}'\big]P(E_{0}')\\
&-\mathop{\mathbb{E}}\big[\alpha_{k}\big]\mathop{\mathbb{E}}\big[R\mid E_{1, i}^{*},E_{1}'\big]P(E_{1}')-\mathop{\mathbb{E}}\big[\alpha_{k}\big]\mathop{\mathbb{E}}\big[R\mid E_{1, i}^{*},E_{0}'\big]P(E_{0}')\big]\\
=&\mathop{\mathbb{E}}\big[(\delta_{0}+\delta_{i}+\delta_{k}+\delta_{ik})p_{k}+(1-p_{k})\cdot 0 - (\delta_{0}+\delta_{i}+\delta_{k}+\delta_{ik})p_{k}^{2}-(\delta_{0}+\delta_{i})(1-p_{k})p_{k}\big]\\
=&p_{k}(1-p_{k})(\delta_{k}+\delta_{ik}).
\end{align*}


In the event of $E_{2, (i,j)}^{*}$ for some $i\neq j \in S'$,
\begin{align*}
(\ref{eq:first-term2})=&\mathop{\mathbb{E}}\big[\mathop{\mathbb{E}}\big[\alpha_{k}R\mid E_{2, (i,j)}^{*}\big]-\mathop{\mathbb{E}}\big[\alpha_{k}\mid E_{2, (i,j)}^{*}\big] \mathop{\mathbb{E}}\big[R\mid E_{2, (i,j)}^{*}\big]\big]\\
=&\mathop{\mathbb{E}}\big[\mathop{\mathbb{E}}\big[\alpha_{k}R\mid E_{2, (i,j)}^{*}, E_{1}'\big]P(E_{1}')+\mathop{\mathbb{E}}\big[\alpha_{k}R\mid E_{2, (i,j)}^{*}, E_{0}']P(E_{0}')\\
&-\mathop{\mathbb{E}}\big[\alpha_{k}\big]\mathop{\mathbb{E}}\big[R\mid E_{2, (i,j)}^{*},E_{1}'\big]P(E_{1}')-\mathop{\mathbb{E}}\big[\alpha_{k}\big]\mathop{\mathbb{E}}\big[R\mid E_{2, (i,j)}^{*},E_{0}'\big]P(E_{0}')\big]\\
=&\mathop{\mathbb{E}}\big[(\delta_{0}+\delta_{i}+\delta_{j}+\delta_{k}+\delta_{ij}+\delta_{ik}+\delta_{jk}+\delta_{ijk})p_{k}+(1-p_{k})\cdot 0 \\
&- (\delta_{0}+\delta_{i}+\delta_{j}+\delta_{k}+\delta_{ij}+\delta_{ik}+\delta_{jk}+\delta_{ijk})p_{k}^{2}-(\delta_{0}+\delta_{i}+\delta_{j}+\delta_{ij})(1-p_{k})p_{k}\big]\\
=&p_{k}(1-p_{k})(\delta_{k}+\delta_{ik}+\delta_{jk}+\delta_{ijk}).
\end{align*}

Continuing this process and substitute the relevant values into Equation (\ref{eq:beta_k}), we can show that 
\begin{equation}
    \beta_{k} =
    \begin{cases}
      \delta_{k} & \text{if}\ E_{0}^{*} \\
      \delta_{k}+\delta_{ik} & \text{if}\ E_{1, i}^{*}\\
       \delta_{k}+\delta_{ik}+\delta_{jk}+\delta_{ijk}& \text{if}\ E_{2, (i,j)}^{*}\\
      ...\\
     \delta_{k}+\sum_{i=1, i \neq k}^{K^{*}}\delta_{ik}+...+\delta_{1...K^{*}}&\text{if} \ E_{K^{*}-1}^{*}.
    \end{cases}
  \end{equation}
  
Since the above holds for all $k=1,2,3,...K^{*}$, we have for each $k=1,2,3,...K^{*}$,
\begin{align*}
\beta_{k}=&\delta_{k} \cdot P(E_{0}^{*}) + \sum_{i \in S'} (\delta_{k}+\delta_{ik})\cdot P(E_{1, i}^{*})+\sum_{i, j \in S', i\neq j}(\delta_{k}+\delta_{ik}+\delta_{jk}+\delta_{ijk})\cdot P(E_{2, (i,j)}^{*})+...\\
&+ \big(\delta_{k}+\sum_{i=1, i \neq k}^{K^{*}}\delta_{ik}+...+\delta_{1...K^{*}}\big)\cdot P(E_{K^{*}-1}^{*}) \numberthis \label{eq:beta_k1}
\end{align*}
Assuming monotonicity in acquiring an additional skill, we can show all the terms in (\ref{eq:beta_k1}) are greater than 0. The first term is positive as both $\delta_{k}$ and $P(E_{0}^{*})$ are positive. To see why the second term is positive, consider two examinees, one with skill set $\bm{\alpha}_{1}=\big\{\bm{\alpha}\mid \alpha_{i}=1, \alpha_{l}=0, \quad\forall l \neq i \in S\big\}$ while the other with skill set $\bm{\alpha}_{2} = \big\{\bm{\alpha}\mid \alpha_{i}=\alpha_{k}=1, \alpha_{l}=0, \quad\forall l \neq i,k \in S\big\}$. Then we know according to Equation (3), $P(R=1\mid \bm{\alpha_{1}})=\delta_0+\delta_{i}$ and $P(R=1\mid \bm{\alpha_{2}})=\delta_0+\delta_{i}+\delta_{k}+\delta_{ik}$. The monotonicity assumption then implies $P(R=1\mid \bm{\alpha_{2}})-P(R=1\mid \bm{\alpha_{1}})=\delta_{k}+\delta_{ik}>0$. Hence the second term is positive. We can use a similar strategy to show all the terms in (\ref{eq:beta_k1}) are positive and thus reach the conclusion that $\beta_{k} \neq 0$ for each $k=1,2,3,...K^{*}$.
\end{proof}

\begin{proof}[Discussion of Remark 2.]
Conditional on $\alpha_1, \alpha_2,...,\alpha_{K^*}$,  consider adding one
$\alpha_k$, for any $k=K^*+1,...,K,$ into the main effect regression model, then its coefficient can be expressed as
\begin{align*}
\beta_k
&=\frac{Cov\Big(R-\mathbb{E}^*[R\mid \alpha_1,...,\alpha_{K^*}],\quad \alpha_k-\mathbb{E}^*[\alpha_k\mid\alpha_1,...,\alpha_{K^*}]\Big)}{Var\Big(R - \mathbb{E}^*[R\mid \alpha_1,...,\alpha_{K^*}]\Big)},
\end{align*}

where $\mathbb{E}^*[A\mid B]$ is the the regression mean function of $A$ on $B$.
In the special case when $K^*=1,$ we seek to show $\beta_k=0.$  
When $K^*=1,$ note that we must have $\mathbb{E}^*[R\mid \alpha_1]=\mathbb{E}[R\mid \alpha_1].$ 
This is because $\alpha_1$ can only take values of 0 or 1. These two variability's can be modeled exhaustively by the free intercept and the only coefficient in the regression mean function. 
Note that when $K^*>1,$ this may not hold in general.
Note by the Law of Total Covariance,
\begin{align*}
&Cov\Big(R-\mathbb{E}^{*}[R\mid \alpha_1],\quad \alpha_k-\mathbb{E}^*[\alpha_k\mid\alpha_1]\Big)\\
=&\mathbb{E}\Big\{Cov\Big(R-\mathbb{E}[R\mid \alpha_1],\quad \alpha_k-\mathbb{E}[\alpha_k\mid \alpha_1]\mid \alpha_1\Big)\Big\}\numberthis\label{eq: residual1}\\
&+Cov\Big\{\mathbb{E}\big(R-\mathbb{E}[R\mid\alpha_1]\mid \alpha_1\big),\quad \mathbb{E}\big(\alpha_k-\mathbb{E}[\alpha_k\mid \alpha_1]\mid \alpha_1\big)\Big\}.\numberthis\label{eq: residual2}
\end{align*}

Note $\eqref{eq: residual2}=0$ and
\begin{align*}
\eqref{eq: residual1}&=\mathbb{E}\Big\{\mathbb{E}\Big[\big(R-\mathbb{E}[R\mid \alpha_1]\big)\big(\alpha_k-\mathbb{E}[\alpha_k\mid \alpha_1]\big)\mid \alpha_1\Big]+\mathbb{E}\Big[\alpha_k-\mathbb{E}[\alpha_k\mid \alpha_1]\mid \alpha_1\Big] \mathbb{E}\Big[\alpha_k-\mathbb{E}[\alpha_k\mid \alpha_1]\mid \alpha_1\Big]\Big\}  \\
&=\mathbb{E}\Big\{\mathbb{E}\Big[(R-\mathbb{E}[R\mid \alpha_1])(\alpha_k-\mathbb{E}[\alpha_k\mid \alpha_1])\mid \alpha_1\Big]\Big\}\\
&=\mathbb{E}\Big\{\mathbb{E}\Big[R\alpha_k-R\mathbb{E}(\alpha_k\mid \alpha_1)-\alpha_k\mathbb{E}(R\mid \alpha_1)+\mathbb{E}(R\mid \alpha_1)\mathbb{E}(\alpha_k\mid \alpha_1)\mid \alpha_1\Big]\Big\}\\
&=\mathbb{E}\Big\{\mathbb{E}[R\alpha_k\mid\alpha_1]-\mathbb{E}[R\alpha_k\mid\alpha_1]-\mathbb{E}[R\alpha_k\mid\alpha_1]+\mathbb{E}[R\alpha_k\mid\alpha_1]\Big\}\\
&=0.
\end{align*}

Where the second line follows from $\mathbb{E}\Big[\alpha_k-\mathbb{E}[\alpha_k\mid \alpha_1]\mid \alpha_1\Big]=0$ and the third line follows from the fact that $\mathbb{E}[R\mid\alpha_1]\mathbb{E}[\alpha_k\mid\alpha_1]=\mathbb{E}[R\alpha_k\mid\alpha_1]$ by the conditional independence between $R$ and $\alpha_k$ given $\alpha_1.$
Therefore, $\beta_k=0$.
\end{proof}





%









%





